\documentclass[11pt]{amsart}

\usepackage{adjustbox,array}
\usepackage{fullpage}
\usepackage{amsmath, amssymb, amsthm}
\usepackage{booktabs}
\usepackage{cleveref}
\usepackage{comment}
\usepackage[foot]{amsaddr}
\usepackage{algorithm,algpseudocode}
\usepackage{graphicx,subcaption}
\usepackage{multirow}
\usepackage{tikz}
\usetikzlibrary{cd}
\usepackage{url}

\newcommand{\Prob}[2]{\mathbb{P}_{#1}\left( #2 \big.\right)}


\graphicspath{{./figs/}} 


\newcommand{\creflastconjunction}{, and~}

\newtheorem{theorem}{Theorem}

\newtheorem{corollary}[theorem]{Corollary}

\newtheorem{lemma}[theorem]{Lemma}

\newtheorem{proposition}[theorem]{Proposition}
\newtheorem{remark}[theorem]{Remark}

\numberwithin{equation}{section}
\numberwithin{theorem}{section}


\title[Targeting influence in a harmonic opinion model]{Targeting influence in a harmonic opinion model}\thanks{Z.M.B. and P.J.M. were supported by ARO MURI award W911NF-18-1-0244. 
Z.M.B. was also supported by NSF DMS-2137511.
J.L.M. acknowledges support from the NSF through grant DMS-2307384 and FRG grant DMS-2152289. 
P.J.M. was also supported by the NSF through grant BCS-2140024. 
B.O. acknowledges support from NSF DMS-1752202 and DMS-2136198.}

\author[Z.~M.~Boyd]{Zachary~M.~Boyd} \address{Department of Mathematics, Brigham Young University, Provo, UT 84602, USA} \email{zachboyd@byu.edu}

\author[N.~Fraiman]{Nicolas Fraiman} \address{Department of Statistics and Operations Research, University of North Carolina at Chapel Hill, Chapel Hill, NC 27599, USA} \email{fraiman@email.unc.edu}

\author[J.~L.~Marzuola]{Jeremy~L.~Marzuola} \address{Department of Mathematics, University of North Carolina at Chapel Hill, Chapel Hill, NC 27599, USA} \email{marzuola@math.unc.edu}

\author[P.~.J.~Mucha]{Peter~J.~Mucha} \address{Department of Mathematics, Dartmouth College, Hanover, NH 03755, USA} \email{peter.j.mucha@dartmouth.edu}

\author[B.~Osting]{Braxton Osting} 
\address{Department of Mathematics, University of Utah, Salt Lake City, UT 84112, USA} 
\email{osting@math.utah.edu}

\begin{document}

\maketitle

\begin{abstract}
Influence propagation in social networks is a central problem in modern social network analysis, with important societal  applications in politics and advertising. A large body of work has focused on cascading models, viral marketing\creflastconjunction finite-horizon diffusion. There is, however, a need for more developed, mathematically principled \emph{adversarial models}, in which multiple, opposed actors strategically select nodes whose influence will maximally sway the crowd to their point of view.

In the present work, we develop and analyze such a model based on harmonic functions and  linear diffusion. 
We prove that our general problem is NP-hard and that the objective function is monotone and submodular; consequently, we can greedily approximate the solution within a constant factor.
Introducing and analyzing a convex relaxation, we show that the problem can be approximately solved using smooth optimization methods. 
We illustrate the effectiveness of our approach on a variety of example networks. 
\end{abstract}

{\bf Keywords}:
opinion dynamics, social network, maximizing influence, zealot, graph Laplace equation, convex relaxation

{\bf MSC codes}:
35J05,  
05C50, 
49M41, 
65K10 

\section{Introduction} \label{sec:Intro}
On social media platforms (and in real life), ideas, information\creflastconjunction opinions propagate through connections in the social network; network members may influence their connections, 
e.g., swaying the political viewpoint of friends, 
disseminating information about current events, or 
increasing product awareness in followers. 
To be concrete, let's consider a particular product for which there are $k=2$ competing brands: a blue brand and a red one. 
Each member of the network will have some degree of preference between the brands and also have varying influence on their connections in the network. 
Indeed, \emph{influencers}  emerge on social media platforms and have the ability to widely market their preferred color product to their many connections. 
Additionally, a \emph{zealot} is an individual that is uncompromising or inflexible in their (typically extreme) idea/opinion.
An \emph{authority} for the blue or red brand wants to most broadly advertise it within the network. They may not be a node in the network themselves, but instead try to influence the network by persuading key nodes to join their side. Companies and political parties are examples of authorities.
We suppose that authorities have a budget to strategically target members of the network and convert them to become zealots. For example, they may sponsor them or otherwise incentivize them to disseminate their viewpoint. 
Intuitively, an authority might target members of the network with many connections, but it is not so simple. For example, choosing a powerful influencer may not be very effective if most of their connections already agree with the authority's point of view.
We refer to the problem of authorities targeting members within a social network so as to maximize the influence of a particular opinion as the \emph{targeting influence problem}. 
The targeting influence problem appears in many guises beyond the targeted product marketing example given here; indeed, swaying public opinion is a natural goal in both politics and advertising. As an additional example, it is paramount for various authorities to combat ``fake news" by disseminating factual information to influential members of a network.

\subsection{Contributions and results} In this paper, \emph{we introduce and analyze a model for targeting influence in a social network in settings where there are a multitude of extreme opinions.} 
The opinion of each member of the network lies in an opinion space, taken to be the convex hull of some extreme opinions; see \cref{s:OpinionModel}. The opinion of each individual is influenced by their (directed) connections in the social network. We use a simple, harmonic model based on DeGroot learning \cite{DeGroot_1974} and label propagation \cite{Zhou_2005} to describe these opinion dynamics; see \cref{s:OpinionDynamics}. 
In \cref{s:authority}, we introduce a measure for the influence for a particular opinion and an optimization problem that describes how an authority might target influence within the social network; see \cref{e:OptProb}. The measure of influence of each opinion is simply the sum of the opinions of the individuals in the network. 
In \cref{s:binary}, we show that 
each authority can group together the opposing opinions/ideas when considering how to maximize their own influence, thus resulting in a reduced \emph{grouped opponent targeting influence problem}; see \eqref{e:OptProb2}.
In \cref{s:probInterp}, we review the probabilistic interpretation of harmonic functions on graphs in terms of the hitting time of an associated random walk. 
In \cref{s:energyInterp}, we describe how the problem, in the case of an undirected 
network, can be interpreted in terms of the graph Dirichlet energy.

In \cref{sec:HandA}, we prove  that there exists a $\varepsilon > 0$ such that it is NP-hard to obtain a $(1-\varepsilon)$-approximation to the targeting influence  problem; see \cref{t:NP-hard}. Our proof uses a gap-introducing reduction to the vertex cover problem on cubic (3-regular) graphs, which is APX-hard. To recall, APX problems are optimization problems that have a polynomial-time constant-factor approximation algorithm. Some problems in APX have a polynomial-time approximation scheme (PTAS) and can be approximated to any factor. However, the PCP Theorem (probabilistically checkable proofs) \cite{arora1998pcp} established that there are problems in APX which cannot be approximated efficiently by a PTAS. 
These so-called APX-hard problems have approximation-preserving reductions from every problem in APX. Therefore, APX-hard problems do not have a PTAS and cannot be approximated to an arbitrary factor. 
We establish that the objective function in the influence targeting problem is monotone and submodular and hence admits a $1 - 1/e$ approximation algorithm; see \cref{t:submodular}. Our proof relies on the probabilistic interpretation of harmonic functions on graphs. 

In \cref{sec:Relax}, we develop an approximation strategy for solutions of the  NP-hard targeting influence  problem. 
 We show that the objective function for the relaxed optimization  problem \eqref{e:OptProb2Relax} is concave and compute the gradient and Hessian; see \cref{t:Convexity}. 
In \cref{sec:GrAut}, we 
consider the relaxed optimization problem and prove that, in the case when the graph has a symmetry, the optimal solution also has the same symmetry; see \cref{p:GraphAuto}.

In \cref{sec:Num}, we describe a numerical implementation of our method and present the outcomes of numerous experiments to illustrate our results and test how well the relaxed optimization problem predicts the solution to the original (APX-hard) problem.  
For simplicity, we consider a game between two opinion authorities to capture the largest proportion of influence, where each authority alternatively selects one vertex to convert to their opinion. 
In general, we see that the relaxed problem highlights similar features of the graph as the unrelaxed problem giving a reasonable and computationally efficient strategy for targeting influence.   

We conclude in \cref{sec:Disc} with a discussion.

\subsection{Previous work}
There are a wide variety of interrelated opinion dynamics, belief propagation\creflastconjunction consensus formation models that have been introduced and analyzed. Generally in these models, the opinions of members in a social network evolve in time via network interactions. The member opinions can be modeled as discrete or continuous and can account for more than two opinions. 
The interactions between members can be modeled using Bayesian or non-Bayesian approaches; 
here we focus on non-Bayesian approaches, which can further be characterized as dynamical/stochastic models 
in discrete/continuous time. 
A review of this vast literature is beyond the scope of this paper (see, e.g., \cite{Acemoglu2011,peralta2022opinion}), 
but non-Bayesian opinion dynamics models include or are motivated by   
cascading models \cite{Kempe_2003,Kempe_2005,Kempe_2015}, 
percolation models, 
the Vicsek swarming/flocking model \cite{Vicsek_1995}, 
the Kuramoto synchronization model \cite{Kuramoto_1975}, 
the DeGroot model \cite{DeGroot_1974} and its many variants (as described in, e.g., \cite{Brooks_2024},
the label propagation model \cite{Zhou_2005},
voter models \cite{Redner_2019}\creflastconjunction
bounded confidence models \cite{Bernardo_2024}). 
In the present work, we consider the steady state of a specific form of the DeGroot dynamical model, which is closely related to the label propagation model.

Within opinion dynamics models a wide variety of questions are posed and studied.
What is the steady state of the dynamics: does the population reach consensus or are opinions polarized/fragmented? 
How do opinion dynamics depend on the topology of the social network 
or on the way in which interactions between members are modeled? 
How do we identify authoritative members of the network \cite{Kleinberg_2011,Pei_2019}? 
If influencers/zealots/opinion authorities/forceful agents are introduced, how does this effect opinion dynamics \cite{Acemoglu_2009,Galam_2007,Mobilia_2007,verma2014impact} and how does this change with network topology \cite{klamser2017zealotry}? 
We view the targeting influence problem as an optimal control/inverse problem for this last question.

There are several previous works that look at targeting influence or ``influence maximization,'' as considered in this paper. To our knowledge, this type of question was first mathematically posed in \cite{Domingos_2001} who modeled influence by a Markov random field. Subsequent work also analyzed this type of question when opinion dynamics is an independent cascade model \cite{bharathi2007competitive,Mossel_2010}.

The  paper most similar to ours is a paper by Fardad, Zhang, Lin\creflastconjunction Jovanov\'ic \cite{fardad2012optimal}, where the authors formulate a binary opinion/idea targeting influence problem equivalent to our grouped opponent targeting influence problem \eqref{e:OptProb2}. In particular, the opinion dynamics model is taken to be the steady-state of the DeGroot model. This problem is reformulated and approximately solved via the alternating direction method of multipliers (ADMM). 
Interestingly, the authors also consider perturbations to the network connections to enhance the influence of zealots.

In \cite{zhao2015competition,zhao2014competitive}, the authors also consider a targeting influence problem, where the opinion dynamics model is also taken to the steady-state of the DeGroot model. However, whereas we consider the  measure of influence of each opinion as the sum of the opinions of the individuals in the network, \cite{zhao2015competition,zhao2014competitive} introduces a measure which counts the voters who prefer each opinion. This ``competition'' between binary opinions/ideas might be more realistic in an election, for example. The authors introduce an  Influence Matrix (IM) criterion to predict the bias of each non-zealot individual and hence the result of the competition. They compare the IM criterion with seven centrality-based criteria and extensive numerical experiments suggest that 
the IM criterion is effective at predicting the competition result.

\subsection*{Notation} 
Write $[k] = \{ 1,2,\ldots,k\}$ for $k \in \mathbb N\setminus \{0\}$. 
Denote by $e_\ell \in \mathbb R^k$ the characteristic vector for coordinate $\ell \in [k]$. 
Denote the unit simplex by $\Delta_k = \{u \in \mathbb R^k\colon u\geq 0, \ \sum_{\ell \in [k]} u_\ell = 1\} = \textrm{conv}\left( \{e_\ell\}_{\ell \in [k]} \right)$. 
For vectors or matrices $a$ and $b$, $a\odot b$ represents the Hadamard product. The outer product of two vectors $a$ and $b$ is written $a\otimes b$.

\section{Opinion dynamics and targeting influence} \label{sec:Models}

\subsection{A model for opinions and influencers} \label{s:OpinionModel}
We consider a particular topic and assume that there are $k\geq 2$ (fixed) extreme opinions on the topic. 
Each individual (represented by a vertex in the network) has an opinion on the topic, and we model each individual's opinion as a convex combination of the extreme opinions. 
To this end, we represent the extreme opinions by the coordinate vectors $e_\ell \in \mathbb R^k$. 
The \emph{opinion of individual $i$} is represented as the vector $u(i) = (u_1, \ldots, u_k)(i) \in \Delta_k$.
Thus, the \emph{opinion space} is the unit simplex, $\Delta_k \subset \mathbb R^{k}_+$, the convex hull of the extreme opinions.

\subsection{A harmonic model for opinion dynamics} \label{s:OpinionDynamics}
Here we describe a simple model for opinion dynamics; this model is related to the DeGroot model \cite{DeGroot_1974} and the label propagation model \cite{Zhou_2005}, as well as other models in opinion dynamics. We call attention to the introduction of \cite{Brooks_2024}, which delineates the literature on numerous variants of the modeling we leverage here, including the introduction of continuous-time modeling \cite{Abelson_1967} and the inclusion of zealots \cite{Friedkin_1990,Taylor_1968}, as well as many other models with rich behaviors. For our present purposes, we introduce the model we consider through a continuous-time dynamics; but we are ultimately focused only on the steady states resulting from that model in formulating our influence maximization problem (though, in so doing, we recognize that models yielding qualitatively different steady states might also be interesting to study in this manner).
We will assume that the individuals are connected via a social network. Within the network, there are zealots with (fixed) extreme opinions and other members with opinions that are determined by their social connections. Roughly speaking, the opinion of the non-zealot individuals will be the average of the opinions of the individuals networked to the individual. 

Let $G = (V,E)$ be a strongly connected, directed graph that represents the social network.  
Set $n = |V|$ and enumerate the nodes so that we may identify  $V$ with $[n]$. 
Let $Z_\ell \subset V$, $\ell \in [k]$ be disjoint vertex sets representing zealots, where $Z_\ell$ is the set of zealots for opinion $\ell \in [k]$. Consequently, we set 
$$
u(i) = e_\ell,  \qquad \qquad i \in Z_\ell, \ \ell \in [k].
$$
The vertex subset $Z := \amalg_{\ell \in [k]} Z_\ell$ is the \emph{collection of all zealots}.
Note that for every vertex $i \in Z$, $u(i)$ is the characteristic vector for the disjoint partition $Z = \amalg_{\ell \in [k]} Z_\ell$. 

Since $G$ is a directed graph, individual $i$ may be influenced by $j$ even if $j$ is not influenced by $i$. We use the convention that a directed edge $(i,j)$ represents that $i$ is influenced by $j$. That means edges are directed in the opposite direction in which opinions spread in the network. 
Let $A\in \mathbb R^{n \times n}$ be the directed adjacency matrix, 
$$
A_{i,j} = \begin{cases}
1 & \textrm{$i$ is influenced by $j$} \\ 
0 & \textrm{otherwise}
\end{cases}.
$$
Let 
$d_i = \sum_j A_{ij}$ be the out-degree of vertex $i \in V$ and define the diagonal matrix, $D \in \mathbb R^{n\times n}$ of out-degrees by $D_{ii} = d_i$, with $D_{ij}=0$ whenever $i\neq j$.
Denote the (out-degree) graph Laplacian for $G$ by $L = D-A$.  In the above construction, influence along edges is considered binary (edge weights are 1 or 0), but our analysis is unaffected if we instead consider weighted adjacency matrices where influence can be weighted differently along edges, i.e. $A=[a_{ij}]$ for $a_{ij} \geq 0$, and the corresponding weighted graph Laplacian $L=D-A$ where now $d_i = \sum_j a_{ij}.$

We assume a specific case of the DeGroot model, wherein the opinions evolve according to linear diffusion, satisfying 
\begin{subequations} \label{e:dynamics}
\begin{align}
& \frac{d}{dt}  u(i)  = -  (L u)(i) = - \sum_j L_{ij}u(j),  && i \in V 
\setminus Z, \label{e:dynamicsA}\\
& u(i) = e_\ell, && i \in Z_\ell, \ \ell \in [k], \label{e:dynamicsB}
\end{align}
\end{subequations}
with appropriate initial conditions. As $t \to \infty$, the opinions approach a steady state. 
The opinions of the non-zealot vertices are then defined to equal this steady state, satisfying
\begin{subequations} \label{e:harmo}
\begin{align}
& L u(i)  = 0,  && i \in V \setminus Z, \label{e:harmoA} \\ 
& u(i) = e_\ell, && i \in Z_\ell, \ \ell \in [k]. \label{e:harmoB}
\end{align}
\end{subequations}
Note that equation \eqref{e:harmoA} is written using vector notation assuming the order of operations that applies the Laplacian before evaluation for the $i$th node and is performed separately for each opinion index; that is, $Lu(i)$ should be read as $(L u_\ell)(i) = 0$ for each $\ell \in [k]$. 
We interpret $u$ to be a harmonic (vector-valued) vertex function satisfying Dirichlet boundary conditions, i.e., a  \emph{graph Laplace problem}. 
In particular, since $L = D-A$, we have that non-zealot $i$'s opinion is the average of their outgoing neighbors $N(i)$ opinions, 
$$
u(i) = \frac{1}{d_i} \sum_{j \in N(i)} u(j), \qquad \qquad i \in V \setminus S,
$$
where $d_i$ is the out-degree of node $i$. 
Because of this local averaging, $u(i)$ is necessarily in the opinion space $\Delta_k$ if its neighbors' opinions are also in that opinion space.
In \eqref{e:harmo}, the zealots (vertices in the boundary $Z$) have opinion in the opinion space $\Delta_k$. 
The following lemma shows that $u(i)$ for every $i \in V$ is in the opinion space $\Delta_k$. 

\begin{lemma} \label{l:HarmonicFact}
Let $Z \subset V$  and $f\colon Z \to \Delta_k$. 
Suppose $u \colon V \to \mathbb R^k$ solves 
\begin{align*}
& L u(i)  = 0,  && i \in V \setminus Z, \\ 
&u(i) = f(i), && i \in Z. 
\end{align*}
Then $u(i) \in \Delta_k$ for every $i \in V$. 
\end{lemma}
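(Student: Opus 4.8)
The plan is to verify the two defining properties of the simplex $\Delta_k$ separately: that the coordinates of $u(i)$ sum to $1$, and that each coordinate is nonnegative. Throughout I would use the interior averaging identity equivalent to harmonicity: for $i \in V \setminus Z$, the equation $L u(i) = 0$ reads coordinatewise as $u_\ell(i) = \frac{1}{d_i}\sum_{j \in N(i)} u_\ell(j)$, so each scalar coordinate function $u_\ell$ equals its out-neighbor average at every interior node. Strong connectivity guarantees $d_i > 0$ for every $i$, so these averages are well defined, and (assuming $Z \neq \emptyset$, as in the intended application) that every node reaches $Z$ along a directed path.

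First I would handle the sum-to-one condition. Let $s = \sum_{\ell \in [k]} u_\ell \colon V \to \mathbb{R}$ be the scalar function obtained by summing the coordinates. Summing the interior equations over $\ell$ gives $(Ls)(i) = 0$ for $i \in V \setminus Z$, so $s$ is harmonic in the interior, while $s(i) = \sum_\ell f_\ell(i) = 1$ on $Z$ because $f(i) \in \Delta_k$. The constant function equal to $1$ is also harmonic, since $L\mathbf{1} = (D - A)\mathbf{1} = 0$ by the definition $d_i = \sum_j A_{ij}$, and it agrees with $s$ on $Z$. By uniqueness of the solution to the graph Laplace problem with nonempty boundary $Z$ on a strongly connected graph, $s \equiv 1$, giving $\sum_\ell u_\ell(i) = 1$ for every $i \in V$.

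Next I would establish nonnegativity via the discrete minimum principle applied to each coordinate $u_\ell$. Suppose $u_\ell$ attains its global minimum over $V$ at some interior node $i_0 \in V \setminus Z$. The averaging identity expresses $u_\ell(i_0)$ as a convex combination of the values at its out-neighbors, each of which is at least $u_\ell(i_0)$; hence every out-neighbor of $i_0$ also attains the minimum. Iterating along directed edges and invoking strong connectivity, every vertex of $V$, and in particular some vertex of $Z$, attains this minimum. Consequently $\min_V u_\ell = \min_Z f_\ell \geq 0$ since $f(i) \in \Delta_k \subset \mathbb{R}^k_+$. Combined with the sum-to-one condition, this shows $u(i) \in \Delta_k$ for every $i \in V$.

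The main obstacle I anticipate is not computational but structural: making the extremum-propagation argument rigorous on a \emph{directed} graph, where the averaging identity relates a node only to its out-neighbors. This is exactly where strong connectivity is indispensable, as it ensures the minimizing set, once shown to be forward-invariant under the out-neighbor relation, exhausts all of $V$ and therefore meets $Z$. The same hypothesis underlies the uniqueness statement used above, which can itself be proved by the maximum principle (the difference of two harmonic functions with equal boundary data is harmonic and so attains its extrema on $Z$, forcing it to vanish) or, equivalently, by noting that the interior block of $L$ is a nonsingular M-matrix. I would also remark that the probabilistic representation $u_\ell(i) = \mathbb{E}_i[f_\ell(X_\tau)]$ for the associated random walk absorbed at $Z$, developed later in the paper, yields both properties simultaneously, but the maximum-principle argument sketched here is self-contained.
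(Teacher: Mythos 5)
Your proposal is correct and follows essentially the same route as the paper's proof: nonnegativity of each coordinate via the discrete minimum/maximum principle, and the sum-to-one property by observing that $s=\sum_\ell u_\ell$ solves the Laplace problem with boundary data $1$ and invoking uniqueness. You simply spell out in more detail the extremum-propagation argument on a directed, strongly connected graph that the paper leaves implicit under the name ``graph maximum principle.''
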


\begin{proof}
Looking at each component $\ell \in [k]$ individually,
by the graph maximum principle, we have that $u_\ell(i)\geq 0$ for all $i \in V$ and $\ell \in [k]$. Defining $v = \sum_{\ell \in [k]} u_\ell$, we have that $v \colon V \to \mathbb R$ solves the Laplace problem 
\begin{align*}
& L v(i)  = 0,  && i \in V \setminus Z, \\ 
&v(i) = 1, && i \in Z.
\end{align*}
The unique solution is given by $v=1$, which gives that $u(i) \in \Delta_k$ for every $i \in V$.
\end{proof}

As an aside, we note that the simplex opinion space constraint holds not only for the steady state solution but also for the underlying dynamics in \eqref{e:dynamics} when all individual opinions are initiated on the unit simplex. 

Limitations of the harmonic model for opinion dynamics in the context of the targeting influence problem  will be discussed in \cref{sec:Disc}.

\subsection{Measuring and targeting influence} \label{s:authority}  
Suppose that we have opinions given by $u\colon V \to \Delta_k$. We measure the influence of opinion $m \in [k]$ by 
\begin{equation}
\label{e:Jm}    
\mathcal{I}_m(u) := \frac{1}{|V|} \sum_{i \in V} u_m (i) = \frac{1}{|V|} \|u_m \|_{\ell^1(V)}. 
\end{equation}
Note that the total influence of all opinions is given by 
$$
\sum_{m \in [k]} \mathcal{I}_m (u) 
= \frac{1}{|V|} \sum_{i \in V} \sum_{m \in [k]}u_m (i)
= 1.
$$
We thus interpret $\mathcal{I}_m(u)$ to be \emph{the proportion of influence that opinion $m \in [k]$ has on the network}. 

We can now formulate the targeting influence problem. 
Let the current zealot set be given by $Z = \amalg_{\ell \in [k]} Z_\ell$. 
Consider the authority for opinion  $m \in [k]$.  We suppose that this authority wants to augment their zealot set $Z_m \to Z_m \cup T$ such that $T \subset V\setminus Z$ has fixed size $|T| = t$ (corresponding to a fixed budget) as to maximize $\mathcal{I}_m$, the proportion of influence that opinion $m$ has on the network. 
We formulate the \emph{targeting influence problem} as  
\begin{subequations}
\label{e:OptProb}
\begin{align}
\label{e:OptProbA}
\max_{T \subset V}  \ & \mathcal{I}_m(u) \\
\label{e:OptProbB}
\textrm{s.t.} \ &  L u(i)  = 0,  && i \in V \setminus (Z \cup T), \\ 
\label{e:OptProbC}
 & u(i) = e_\ell, && i \in Z_\ell, \ \ell \in [k], \\
 \label{e:OptProbD}
 & u(i) = e_m, && i \in T, \\ 
 \label{e:OptProbE}
 & |T| = t, \ T \cap Z = \varnothing. 
\end{align}
\end{subequations}
Intuitively, if the authority wants to have a large effect on the population, it should choose $T \subset V \setminus Z$ to consists of influencers within the population.

Limitations in our formulation of the targeting influence problem 
\eqref{e:OptProb} will be discussed in \cref{sec:Disc}.

\subsection{Reduction to a grouped opponent targeting influence problem} \label{s:binary}
We now explain how, for fixed authority $m \in [k]$,  the targeting influence problem \eqref{e:OptProb}  
reduces to an us-vs-them targeting influence problem. 

We observe that  $\mathcal{I}_m$, defined in \eqref{e:Jm}, only depends directly on $u_m$ and not $u_\ell$, $\ell \neq m$. Additionally, writing $v \equiv u_m$, the constraint set also separates and the equation for $v\colon V \to [0,1]$ can be written 
\begin{align*}
& L v(i)  = 0, && i \in V \setminus (Z \cup T),  \\ 
& v(i) = 1, && i \in Z_m \cup T, \\
& v(i) = 0, && i \in Z_\ell, \ \ell \neq m. 
\end{align*} 
These three equations replace the constraints \eqref{e:OptProbB}-\eqref{e:OptProbD}. Abusing notation we write $\mathcal{I}_m(v) = \frac{1}{|V|} \|v\|_{\ell^1(V)}$. For fixed $m \in [k]$,  \eqref{e:OptProb} can be equivalently expressed as 
\begin{subequations}
\label{e:OptProb2}
\begin{align}
\label{e:OptProb2A}
\max_{T\subset V}  \ & \mathcal{I}_m(v) \\
\label{e:OptProb2B}
\textrm{s.t.} \ & L v(i)  = 0, && i \in V \setminus (Z \cup T), \\ 
\label{e:OptProb2C}
& v(i) = 1, && i \in Z_m \cup T,  \\
\label{e:OptProb2D}
& v(i) = 0, && i \in Z_\ell, \ \ell \neq m , \\
\label{e:OptProb2F}
 & |T| = t, \ T \cap Z = \varnothing.
\end{align}
\end{subequations}
Thus, authority $m$ can group together the opposing opinions/ideas when consider how to maximize their own influence. We refer to \eqref{e:OptProb2} as the \emph{grouped opponent  targeting influence problem}.
The advantage of \eqref{e:OptProb2} over \eqref{e:OptProb} is that it involves a scalar-valued field on $V$ rather than a vector-valued field on $V$.

\begin{remark}
Equation \eqref{e:OptProb2} is equivalent to \cite[Eq.~(1)]{fardad2012optimal}, but presented with different language. 
\end{remark}

\subsection{Probabilistic interpretation} \label{s:probInterp}

The network opinion can be related to the hitting probabilities of the zealot sets for a random walk. Let $X_t$ be a random walk on the network, that is $\Prob{}{X_{t+1} = j \mid X_t = i} = A_{ij}/d_i$. We write $\Prob{i}{\,\cdot\,} = \Prob{}{\,\cdot\,|\, X_0 = i}$ as shorthand for the random walk starting from node $i$. For a subset of vertices $U$ define its hitting time as $\tau(U) = \inf\{t\geq 0\colon  X_t \in U\}$ for a given random walk realization. That is, importantly, the $\tau(U)$ is a random variable dependent on the $X_t$ random walk and selected subset $U$.

\begin{lemma}\label{l:probInterp}
Given zealot sets $Z_\ell$ for $\ell \in [k]$, if $u$ is the solution to \eqref{e:harmo} we have
\[
\textstyle u_m(i) = \Prob{i}{\tau(Z) = \tau(Z_m)}.
\]
\end{lemma}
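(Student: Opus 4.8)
The plan is to show that the right-hand side, viewed as a function of the starting vertex $i$, solves exactly the same graph Laplace problem that characterizes the component $u_m$ in \eqref{e:harmo}, and then to conclude by uniqueness. Accordingly, define $p_m(i) = \Prob{i}{\tau(Z) = \tau(Z_m)}$. Because $G$ is finite and strongly connected, the walk $X_t$ is irreducible on a finite state space, so $\tau(Z) < \infty$ almost surely from every starting vertex; hence $p_m$ is well defined, and since the $Z_\ell$ are disjoint we have $\tau(Z) = \min_{\ell} \tau(Z_\ell)$, so the event $\{\tau(Z) = \tau(Z_m)\}$ is precisely the event that the first zealot visited by the walk lies in $Z_m$.

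Next I would verify the boundary conditions. If $i \in Z_m$ then $X_0 = i \in Z_m \subset Z$, so $\tau(Z) = \tau(Z_m) = 0$ and $p_m(i) = 1$. If instead $i \in Z_\ell$ for some $\ell \neq m$, then $\tau(Z) = 0$ while $X_0 = i \notin Z_m$ forces $\tau(Z_m) \geq 1 > \tau(Z)$, whence $p_m(i) = 0$. These match the boundary data \eqref{e:harmoB} for $u_m$: namely $p_m = 1$ on $Z_m$ and $p_m = 0$ on $Z_\ell$ for $\ell \neq m$.

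For the interior equation, fix $i \in V \setminus Z$. Since $i$ is not a zealot, hitting $Z$ requires at least one step, so conditioning on the first move and applying the Markov property yields the first-step decomposition $p_m(i) = \sum_j \frac{A_{ij}}{d_i}\, p_m(j)$: after stepping to a neighbor $j$ the walk restarts from $j$, and the event $\{\tau(Z) = \tau(Z_m)\}$ depends only on the post-$X_0$ trajectory. Multiplying by $d_i$ gives $\sum_j A_{ij} p_m(j) = d_i\, p_m(i)$, i.e. $(Lp_m)(i) = (D-A)p_m(i) = 0$ for all $i \in V \setminus Z$, so $p_m$ satisfies \eqref{e:harmoA}.

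Having shown that $p_m$ and $u_m$ solve the same Dirichlet problem, I would finish by invoking uniqueness: the difference $w = u_m - p_m$ is harmonic on $V \setminus Z$ and vanishes on $Z$, and the graph maximum principle (as already used in the proof of \cref{l:HarmonicFact}) forces $w \equiv 0$. I expect the only delicate point to be the rigorous justification of the first-step identity, that is, the appeal to the Markov property together with the confirmation that the hitting-time probabilities are genuine (non-defective) distributions. The latter is exactly what strong connectivity of $G$ buys us, since it makes $\tau(Z)$ finite almost surely and thereby legitimizes conditioning on $\{X_1 = j\}$ for interior $i$.
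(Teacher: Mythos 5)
Your proposal is correct and follows essentially the same route as the paper: verify that $i \mapsto \Prob{i}{\tau(Z)=\tau(Z_m)}$ satisfies the boundary data on $Z$ and the harmonicity condition on $V\setminus Z$ via a first-step (Markov property) decomposition, then conclude by uniqueness of the solution to the Dirichlet problem. The only difference is that you make explicit two points the paper leaves implicit --- finiteness of $\tau(Z)$ from strong connectivity and the uniqueness step via the maximum principle --- which is a welcome addition rather than a deviation.
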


\begin{proof}
All we need to check is that the right hand side is harmonic and satisfies the boundary conditions. For $i \in V\setminus Z$, conditioning on the first step of the random walk (using the Markov property and $\Prob{i}{X_1 = j} = 1/d_i$ for $j\in N(i)$) we have
\begin{align*}
\Prob{i}{\tau(Z) = \tau(Z_m)} 
&= \sum_{j\in N(i)} \Prob{i}{\tau(Z) = \tau(Z_m) \mid X_1 = j} \Prob{i}{X_1 = j} \\ 
&= \frac{1}{d_i} \sum_{j\in N(i)} \Prob{j}{\tau(Z) = \tau(Z_m)}.
\end{align*}
Moreover, for the boundary nodes we have
\[
\textstyle \Prob{i}{\tau(Z) = \tau(Z_m)} = \begin{cases}
1 &\text{if } i\in Z_m, \\
0 &\text{if } i\in Z\setminus Z_m.
\end{cases}
\]
Therefore, $\big(\Prob{i}{\tau(Z) = \tau(Z_\ell)}\big)_{\ell\in [k]}$ satisfies equation \eqref{e:harmo}.
\end{proof}

\begin{remark}
Note that $\tau(Z) = \min_{\ell\in [k]} \tau(Z_\ell)$. For fixed $m$ define $Z' = \cup_{\ell\neq m} Z_\ell$. We have that the following two events coincide $\big\{ \tau(Z) = \tau(Z_m) \big\} = \big\{ \tau(Z_m) < \tau(Z') \big\}$.
\end{remark}

One way to interpret this result is to say that $i$ does a random walk until it meets one of the zealot sets and it then picks the corresponding opinion. Then, $u$ is the distribution of the opinion of node $i$.

\subsection{Energy interpretation in the undirected case} \label{s:energyInterp}
We consider the case in which the graph is undirected, i.e., $A_{ij} = A_{ji}$. 
We define the Dirichlet energy of the (vector-valued) graph vertex function $u \colon V \to \Delta_k$ by 
$E(u) = \frac{1}{2} \sum_{\ell \in [k]} \langle u_\ell, L u_\ell \rangle$.
The opinions of the non-zealot vertices are then defined as the solution to the energy minimization problem, 
\begin{subequations} \label{e:Energy}
\begin{align}
\textrm{argmin}_{u} \ & E(u) \\ 
\textrm{s.t.} 
\ &u(i) = e_\ell, && i \in Z_\ell, \ \ell \in [k]. \label{e:EnergyB}
\end{align}
\end{subequations}
The Dirichlet conditions \eqref{e:EnergyB} state that the opinions of the zealots are fixed at their extreme values.  The opinions of non-zealots take values as to minimize the Dirichlet energy of the opinion field $u$. Mathematically, \eqref{e:Energy} defines a harmonic field with image in the unit simplex.
Again, the energy function and constraints in \eqref{e:Energy} are separable, and we can solve for each component of the opinion field $u \colon V \to \Delta_k$ independently.

\section{Hardness and approximation  of the targeting influence problem} \label{sec:HandA}

\subsection{Hardness of the targeting influence problem}

Here we show that the targeting influence problem is NP-hard to approximate by a reduction to the vertex cover problem on cubic (3-regular) graphs.

\begin{remark}
    Establishing the hardness of the problem is most interesting for $k$ growing with $n$. Our problem (and the Vertex Cover problem) is fixed-parameter tractable, i.e., for bounded $k$ one could solve by doing an exhaustive search of all subsets of size $k$ which is polynomial in $n$ (though only really feasible for very small $k$).
\end{remark}

\begin{theorem} \label{t:NP-hard}
There exists an $\varepsilon > 0$ such that it is NP-hard to obtain a $(1-\varepsilon)$-approximation to the influence targeting problem.
\end{theorem}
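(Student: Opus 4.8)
The plan is to establish NP-hardness via a gap-introducing reduction from Vertex Cover on cubic graphs, which is APX-hard and in particular cannot be approximated beyond some fixed constant unless $\mathrm{P}=\mathrm{NP}$. Given a cubic graph $H=(V_H,E_H)$ on which we wish to find a minimum vertex cover, I would construct an instance of the grouped opponent targeting influence problem \eqref{e:OptProb2} in which the authority must choose a budget-$t$ set $T$, and I would engineer the gadget so that the influence $\mathcal{I}_m(v)$ is large precisely when $T$ corresponds to a vertex cover of $H$. The most natural construction places one ``vertex gadget'' node for each $u \in V_H$ (the candidate targets), one ``edge gadget'' node for each $\{a,b\}\in E_H$, and a fixed opposing zealot (opinion $\ell \neq m$) attached so that each edge node is pulled toward opinion $0$ unless at least one of its two incident vertex nodes is selected into $T$. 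Using the probabilistic interpretation of \cref{l:probInterp}, the value $v(i)=\Prob{i}{\tau(Z_m\cup T)<\tau(Z')}$ at an edge node is close to $1$ when one of its endpoints is a zealot for opinion $m$ (a random walk hits the selected vertex before escaping to the opposing zealot), and bounded away from $1$ when neither endpoint is selected. Summing over all edge nodes, $\mathcal{I}_m(v)$ is then an affine-up-to-lower-order function of the number of covered edges.

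The key steps, in order, are: (1) specify the gadget graph precisely, including edge weights or auxiliary ``sink'' structure so that the harmonic values are cleanly computable; (2) compute, via the random-walk hitting-probability formula, the exact or near-exact contribution of a covered versus uncovered edge node to the objective, obtaining a separation of the form $v(\text{covered}) \geq 1 - \delta_1$ and $v(\text{uncovered}) \leq \delta_0$ with $\delta_1,\delta_0$ controllable by the gadget weights; (3) show that the optimal $T$ of a given size $t$ induces a vertex cover whose size is controlled by the achieved objective, so that a $(1-\varepsilon)$-approximation of $\mathcal{I}_m$ would yield a $(1+c\varepsilon)$-type approximation of minimum vertex cover; and (4) invoke APX-hardness of cubic Vertex Cover to transfer the inapproximability gap, producing the constant $\varepsilon>0$ in the statement. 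I would set the budget $t$ equal to the target cover size being tested, so that maximizing influence is equivalent to covering as many edges as possible with $t$ selected vertices, i.e.\ the maximum-coverage form of the problem, and then relate this to the decision version of minimum vertex cover.

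The main obstacle will be step (2): the harmonic values are global functions of the entire gadget, so the influence contribution of an edge node is not literally independent of the rest of the selection, and the random walk from an uncovered edge node can still reach opinion-$m$ zealots through long paths in $H$. To control this, I would weight the escape edges to the opposing zealot heavily (or attach a large ``grounding'' structure to each edge node) so that, to leading order, each edge node's value is determined only by its two incident vertex nodes and the local escape, making cross-talk through the graph an $o(1)$ correction that can be absorbed into the gap. Making this quantitative — showing the aggregate error across all $O(|E_H|)$ edge nodes stays below the APX gap — is the delicate part, and it is where the precise choice of weights and the monotonicity/submodularity structure from \cref{t:submodular} (which guarantees the greedy surrogate behaves well) would be used to keep the bookkeeping tight. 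Once the per-edge separation dominates the accumulated error, the reduction is approximation-preserving and the theorem follows.
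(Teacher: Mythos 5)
Your overall strategy---a gap-introducing reduction from Vertex Cover on cubic graphs, which is APX-hard, combined with the hitting-probability interpretation of the harmonic values and a single opposing zealot $z$---is exactly the strategy the paper uses. Where you diverge is in the gadget. The paper does not introduce edge nodes at all: it simply attaches $z$ to \emph{every} vertex of the cubic graph $G$ and measures influence at the original vertices themselves. This makes the cross-talk problem you identify as the main obstacle disappear. In a \texttt{Yes}-instance, if $A$ is a vertex cover then every vertex $i\notin A$ has all three of its $G$-neighbors in $A$ and one extra neighbor $z$, so $v(i)=3/4$ exactly after a single step of the random walk---no global harmonic computation is needed. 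In a \texttt{No}-instance, at least $\delta n$ vertices have an uncovered incident edge, and for such a vertex the two-step escape $i\to j\to z$ gives the clean upper bound $v(i)\le 1-1/4-(1/4)^2=11/16$. The $3/4$ versus $11/16$ separation on $\delta n$ vertices yields a constant additive gap in $\mathcal{I}_m$, and the theorem follows. The 3-regularity is doing real work here: it both localizes the harmonic values and pins down $k=(1+\gamma)n/2$ so the \texttt{Yes} and \texttt{No} objective values can be compared.

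Your step (2) is the genuine gap, and the fix you propose does not work as stated. If each edge node is joined to its two endpoints with unit weight and to $z$ with a large weight $w$, then its value is at most $2/(2+w)$ whether or not the edge is covered, so heavy escape weighting suppresses \emph{covered} edge nodes just as much as uncovered ones; you cannot obtain the separation $v(\text{covered})\ge 1-\delta_1$ versus $v(\text{uncovered})\le\delta_0$ this way. What is actually available is a multiplicative gap between the two cases, both of order $1/w$, and you would then have to show that the resulting additive gap, summed over $O(|E_H|)$ edge nodes, dominates both the cross-talk error and the large baseline contribution of the vertex nodes and the selected set $T$ (which also count in $\mathcal{I}_m$, since the objective sums over all of $V$). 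This is doable but requires a different quantitative target than the one you set. Separately, the appeal to the submodularity of \cref{t:submodular} to ``keep the bookkeeping tight'' is not relevant to the reduction: submodularity of the objective says nothing about bounding the error terms in a hardness gadget. I would recommend either adopting the paper's leaner construction or reworking step (2) around an additive gap of the form achieved there.
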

\begin{proof}[Proof of \cref{t:NP-hard}]
We will use a gap-introducing reduction to the vertex cover problem in cubic graphs (where all vertices have degree exactly $3$). Since this problem is APX-hard \cite{Alimonti_2000} by the PCP Theorem \cite{arora1998pcp} there is a $\delta > 0$ and $k=k(n)$ so that it is NP-hard to decide the promise problem of whether a cubic graph with $n$ vertices (and $m=3n/2$ edges) has a vertex cover of size $\leq k$ (a \texttt{Yes}-instance) or if any subset of $k$ vertices leaves at least $\delta m$ edges uncovered (a \texttt{No}-instance).

Suppose you are given a cubic graph $G=(V,E)$ on $n$ vertices that is a \texttt{Yes/No}-instance. Extend the graph $G$ to $G'$ by adding an extra node $z$ and edges from each vertex $v$ of $G$ to $z$. Consider the influence maximization problem $\max_A J(A)$ with $|A|=k$ and $J(A) = \frac{1}{n+1}\sum_{i\in V} v(i)$ where $Lv(i) = 0$ for all $i\in V\setminus A$, $v(z)=0$ and $v(i)=1$ for all $i\in A$.

Note that since $G$ is $3$-regular any vertex cover has size at least $m/3 = n/2$ and at most $n$. Therefore we can assume $k=(1+\gamma)n/2$ for $\gamma\in [0,1]$. 

If $G$ is a \texttt{Yes}-instance then by picking $A$ to be the vertex cover on $k$ vertices we can lower bound the optimal value by $J(A)$. For each $i \in V\setminus A$ we have $v(i) = \Prob{i}{\tau(A) \leq \tau(z)} = 3/4$. This is because $G$ is 3-regular and each one of those edges has its other endpoint in $A$ and we added an edge to the vertex $z$, so the random walk hits $A$ before $z$ in one step with probability $3/4$. Therefore $\sum_{i\in V} v(i) = k + (3/4)(n-k)$. Using that $k=(1+\gamma)n/2$ we have that $(n+1) J(A) = (1+\gamma)n/2 + (3/4)(1-\gamma)n/2$ which simplifies to give 
\[
\text{OPT} \geq J(A) = \left(\frac{7+\gamma}{8}\right)\frac{n}{n+1}.
\]

If $G$ is a \texttt{No}-instance then any subset $A$ of $k$ vertices leaves at least $\delta m = \delta 3n/2$ edges uncovered. Since $G$ is $3$-regular, there must be at least $\delta n$ vertices with some uncovered edge. This is because there are $\delta 3n$ endpoints of the uncovered edges, if there were fewer than $\delta n$ vertices by the pigeonhole principle one vertex would have to cover more than $3$ of these endpoints which is impossible by the $3$-regularity of $G$.

 If $i$ is one of these $\delta n$ vertices then it has a neighbor $j$ that is not in $A$, the random walk can move from $i$ to $j$ and then $z$, thus $v(i) = \Prob{i}{\tau(A) \leq \tau(z)} \leq 1-1/4-(1/4)^2 = 11/16$. Then, for any $A$ we have that $(n+1) J(A) \leq k + (3/4)(n-k-\delta n) + (11/16)\delta n$. Therefore, using $k=(1+\gamma)n/2$ we can bound
\[
\text{OPT} = \max_A J(A) \leq \left(\frac{7+\gamma}{8}-\frac{\delta}{16}\right) \frac{n}{n+1}.
\]

If we were able to approximate the optimal value with arbitrary accuracy we would be able to determine if the graph $G$ is a \texttt{Yes} or a \texttt{No} instance. 
\end{proof}

\subsection{Approximation of the targeting influence problem}
For fixed $m \in [k]$ and a zealot set $Z = \amalg_{\ell \in [k]} Z_\ell$, we consider function $F_m \colon 2^{V\setminus Z} \to \mathbb R$, which takes a vertex subset $T\subset V \setminus Z$ and returns the value 
$$
F_m(T) := \mathcal{I}_m(v_T) = \frac{1}{|V|} \| v_T \|_{\ell^1(V)},
$$ where $v_T$ satisfies \eqref{e:OptProb2B}-\eqref{e:OptProb2D}. 
We will prove that $F_m$ is a monotone and submodular set function. 
Recall that $f$ is a \emph{monotone function} if for any $T_1,T_2 \subseteq V\setminus Z$ satisfying $T_1 \supseteq  T_2$, we have $f(T_1) \geq f(T_2)$. We say that $f$ is a 
\emph{submodular set function} if for all $T \subseteq V\setminus Z$ and $x,y \in V\setminus Z$, 
$$
f(T\cup \{x\}) - f(T) \geq f(T\cup \{x,y\} ) - f(T\cup \{y\}).
$$
Intuitively, we think of submodularity of a function as giving diminishing returns as elements (vertices) are added to the set $T$. 

For $T \subseteq V\setminus Z$, write $v_i(T)$ as the solution to the following equation evaluated at vertex $i \in V$
\begin{align*}
& L v(i)  = 0, && i \in V \setminus (Z \cup T),  \\ 
& v(i) = 1, && i \in Z_m \cup T, \\
& v(i) = 0, && i \in Z_\ell, \ \ell \neq m. 
\end{align*} 
We first establish the following 
\begin{lemma}
\label{lem:singleSol}
$v_i \colon 2^{V\setminus Z} \to \mathbb R$ is a monotone submodular set function.
\end{lemma}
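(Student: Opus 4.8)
My plan is to leverage the probabilistic interpretation of \cref{l:probInterp} to reduce the statement to the elementary submodularity of a set-hitting (``coverage'') indicator, averaged over random-walk trajectories. By \cref{l:probInterp} and the remark following it, writing $Z' = \cup_{\ell \neq m} Z_\ell$ for the grouped opponent set, the solution $v$ of the constraint system (with $v=1$ on $Z_m \cup T$ and $v=0$ on $Z'$) admits the representation
\[
v_i(T) = \Prob{i}{\tau(Z_m \cup T) < \tau(Z')}.
\]
The structural observation that drives everything is that the random walk $X_t$ on $G$ depends only on $G$ and the starting vertex $i$: the augmented set $Z_m \cup T$ enters solely through the hitting-time comparison, while $Z'$ does not involve $T$ at all. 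I will therefore condition on a fixed realization $\omega$ of the walk (one walk, shared across all choices of $T$) rather than on any $T$-dependent absorbing dynamics.

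Concretely, I would introduce the (random, but $T$-independent) set of vertices visited strictly before the walk reaches $Z'$, namely $R(\omega) = \{X_t(\omega) \colon 0 \le t < \tau(Z')(\omega)\}$. Because $T \cap Z' = \varnothing$, the event $\{\tau(Z_m \cup T) < \tau(Z')\}$ coincides with $\{(Z_m \cup T) \cap R(\omega) \neq \varnothing\}$, i.e.\ the augmented ``good'' set is hit along the trajectory before $Z'$ precisely when it intersects $R(\omega)$. Hence $v_i(T) = \mathbb{E}_i\big[\psi_\omega(T)\big]$, where $\psi_\omega(T) = \mathbf 1\big[(Z_m \cup T) \cap R(\omega) \neq \varnothing\big]$. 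The boundary cases are immediate in this picture: if $i \in Z_m$ then $\psi_\omega \equiv 1$ and if $i \in Z'$ then $\tau(Z')=0$, $R(\omega)=\varnothing$, and $\psi_\omega \equiv 0$; constant functions are trivially monotone and submodular.

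It then suffices to show that, for each fixed realization $\omega$, the map $T \mapsto \psi_\omega(T)$ is monotone and submodular, after which I take expectations: monotonicity and submodularity are both preserved under nonnegative averaging (the defining inequalities are linear in $\psi$). Monotonicity of $\psi_\omega$ is clear, since enlarging $T$ enlarges $(Z_m \cup T)\cap R(\omega)$. For submodularity I would compute the discrete marginal: adding a vertex $x$ to the set $Z_m \cup T$ flips the indicator from $0$ to $1$ exactly when $x \in R(\omega)$ and $(Z_m \cup T) \cap R(\omega) = \varnothing$, so
\[
\psi_\omega(T\cup\{x\}) - \psi_\omega(T) = \mathbf 1\big[x \in R(\omega),\ (Z_m \cup T)\cap R(\omega) = \varnothing\big].
\]
Replacing $T$ by $T \cup \{y\}$ only adds the further requirement $y \notin R(\omega)$, so the corresponding event shrinks; this is exactly the diminishing-returns inequality in the stated definition of submodularity, and it holds for arbitrary $x,y \in V\setminus Z$ (including degenerate cases where $x \in T$ or $x=y$, which make one or both marginals vanish).

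I do not expect a genuine obstacle here; the proof is short once the right viewpoint is fixed. The one conceptual step that must be gotten right—and which I would highlight as the crux—is the \emph{decoupling} of the random walk from $T$: recognizing that a single walk can be used for all target sets simultaneously, so that $v_i(T)$ becomes an expectation of a pure coverage function. Everything else (monotonicity, the marginal computation, and stability under expectation) is routine.
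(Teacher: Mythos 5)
Your proof is correct and follows essentially the same route as the paper's: both rest on the probabilistic interpretation of \cref{l:probInterp} applied to $A = Z_m \cup T$, with monotonicity from $\tau(A\cup R)\leq\tau(A)$ and submodularity from a pathwise containment of hitting events. Your explicit introduction of the visited set $R(\omega)$ and the coverage-function/expectation framing is a clean repackaging of the paper's direct containment $\big\{\tau(A\cup\{x,y\}) < \tau(Z') \leq \tau(A\cup\{y\})\big\} \subseteq \big\{\tau(A\cup\{x\}) < \tau(Z') \leq \tau(A)\big\}$ (``hitting $x$ before $Z'$ while avoiding $A\cup\{y\}$ implies doing so while avoiding just $A$''), so the underlying mechanism is identical.
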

\begin{proof}
Fix an arbitrary subset $T \neq \varnothing$. Let $Z' = \cup_{\ell\neq m} Z_\ell$ and let $A = Z_m \cup T$. Recall the hitting time for a subset of vertices $U$ by a random walk is defined as $\tau(U) = \inf\{t\geq 0: X_t \in U\}$. We have that $v_i(T) = \Prob{i}{\tau(A) < \tau(Z')}$ by the probabilistic interpretation in \cref{s:probInterp}.

To prove that it is monotone note that if $T_1 = T$ and $T_2 = T\cup R$ we have 
\[
v_i(T_1) = \Prob{i}{\tau(A) < \tau(Z')} \leq \Prob{i}{\tau(A\cup R) < \tau(Z')} = v_i(T_2),
\]
where the inequality follows from the fact that $\tau(A\cup R) \leq \tau(A)$.

Now to prove submodularity consider the differences
\begin{align}
v_i(T\cup \{x\}) - v_i(T) 
&= \Prob{i}{\tau(A\cup\{x\}) < \tau(Z')} - \Prob{i}{\tau(A) < \tau(Z')} \notag \\
&= \Prob{i}{\tau({A\cup\{x\})} < \tau(Z') \leq \tau(A)} 
\label{eq:Ax}
\intertext{and}
v_i(T\cup \{x,y\} ) - v_i(T\cup \{y\})
&= \Prob{i}{\tau(A\cup\{x,y\}) < \tau(Z')} - \Prob{i}{\tau(A\cup\{y\}) < \tau(Z')} \notag \\
&= \Prob{i}{\tau(A\cup\{x,y\}) < \tau(Z') \leq \tau(A\cup\{y\})}.
\label{eq:Axy}
\end{align}
The second equality in \eqref{eq:Ax} follows\footnote{One can also show that it follows from the law of total probability:
\begin{align*}
\Prob{i}{\tau(A\cup x) < \tau(Z')}
  &  = \Prob{i}{\tau(A\cup x) < \tau(Z') | \tau(A) < \tau(Z')} \Prob{i}{\tau(A) < \tau(Z')} \\
   &\quad + \Prob{i}{\tau(A\cup x) < \tau(Z') | \tau(A) \geq   \tau(Z')} \Prob{i}{\tau(A) \geq \tau(Z')} \\
   &  = \Prob{i}{\tau(A) < \tau(Z')} + \Prob{i}{\tau(A\cup x) < \tau(Z') \leq \tau(A)},
\end{align*}
where we use the fact that $\Prob{i}{\tau(A\cup x) < \tau(Z') | \tau(A) < \tau(Z')} = 1.$} from the fact that the events 
\begin{align*}
E  = \big\{ \tau(A) < \tau(Z') \big\} \subset 
     \big\{ \tau(A\cup\{x\}) < \tau(Z') \big\} = F
\end{align*}
since if you hit $A$ before $Z'$ then you also hit $A\cup\{x\}$. Therefore,
\begin{align*}
\Prob{i}{\tau({A\cup\{x\})} < \tau(Z') \leq \tau(A)}  = \Prob{i}{F\cap E^c} = \Prob{i}{F} - \Prob{i}{E}. 
\end{align*}

The event $\big\{ \tau(A\cup\{x,y\}) < \tau(Z') < \tau(A\cup\{y\}) \big\}$ is equivalent to the random walk hitting $x$ before $Z'$ while avoiding $A\cup\{y\}$, this is contained in the event $\big\{ \tau(A\cup\{x\}) < \tau(Z') \leq \tau(A) \big\}$ which is equivalent to hitting $x$ before $Z'$ while avoiding just $A$. Therefore \eqref{eq:Axy} is smaller than \eqref{eq:Ax}. That is 
\[
v_i(T\cup \{x\}) - v_i(T) \geq v_i(T\cup \{x,y\} ) - v_i(T\cup \{y\}). 
\]
\end{proof}

\begin{proposition} \label{t:submodular}
The function $F_m \colon 2^{V\setminus Z} \to \mathbb R$ is a monotone and submodular set function satisfying 
\begin{itemize}
\item  $F_m(V\setminus Z) = \frac{|V \setminus Z| + |Z_m|}{|V|} \leq 1$. 
\item $F_m(\varnothing) = \frac{1}{|V|} \| v_\varnothing \|_1 \geq 0$  where $v_\varnothing$ satisfies 
\begin{align*}
& L v(i)  = 0, && i \in V \setminus Z ,\\ 
& v(i) = 1, && i \in Z_m , \\
& v(i) = 0, && i \in Z_\ell, \ \ell \neq m  .
\end{align*}
\end{itemize}
\end{proposition}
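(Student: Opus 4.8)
The plan is to exploit the fact that $F_m$ is, up to the positive constant $1/|V|$, a sum over vertices of the single-vertex functions $v_i$ already analyzed in \cref{lem:singleSol}. Concretely, since every $v_T$ solving \eqref{e:OptProb2B}--\eqref{e:OptProb2D} is nonnegative (by \cref{l:HarmonicFact} applied componentwise, or the graph maximum principle, so that $v_T(i) \in [0,1]$), we have $\|v_T\|_{\ell^1(V)} = \sum_{i\in V} v_i(T)$, and therefore
\[
F_m(T) = \frac{1}{|V|}\sum_{i\in V} v_i(T).
\]
The first step is to record the elementary closure property that monotonicity and submodularity are both preserved under nonnegative linear combinations: if each $v_i$ satisfies $v_i(T_1)\geq v_i(T_2)$ for $T_1\supseteq T_2$, then so does any sum with nonnegative weights; and summing the defining submodularity inequality $v_i(T\cup\{x\}) - v_i(T) \geq v_i(T\cup\{x,y\}) - v_i(T\cup\{y\})$ over $i$ with the positive weight $1/|V|$ yields the identical inequality for $F_m$. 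Invoking \cref{lem:singleSol}, which already establishes that each $v_i$ is monotone and submodular, this immediately shows that $F_m$ is monotone and submodular.

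It then remains to verify the two boundary evaluations, both of which are direct computations. For $T = V\setminus Z$ the constraints force $v(i)=1$ on $Z_m\cup(V\setminus Z)$ and $v(i)=0$ on $\bigcup_{\ell\neq m} Z_\ell$, so there are no interior (harmonic) nodes left and $\sum_{i\in V} v_i(V\setminus Z) = |V\setminus Z| + |Z_m|$, giving the stated value of $F_m(V\setminus Z)$. The bound $F_m(V\setminus Z)\leq 1$ follows from $|Z_m|\leq |Z|$, since $|V\setminus Z| + |Z_m| \leq |V\setminus Z| + |Z| = |V|$. For $T=\varnothing$ one simply reads off $F_m(\varnothing) = \frac{1}{|V|}\|v_\varnothing\|_1$ from the definition, and nonnegativity of $v_\varnothing$ (again by the maximum principle) gives $F_m(\varnothing)\geq 0$.

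I do not anticipate a genuine obstacle here: the substantive content was carried out in \cref{lem:singleSol}, and the proposition is essentially a bookkeeping consequence of linearity of the vertex sum together with two endpoint evaluations. The only point requiring a little care is confirming that the hitting-time representation, and hence the monotonicity/submodularity of each $v_i$, remains valid in the degenerate configurations (for instance $T=\varnothing$ with $Z_m$ possibly empty, where $v_i(\varnothing)$ degenerates to a probability of $0$); but these are all covered by the same argument used in \cref{lem:singleSol}.
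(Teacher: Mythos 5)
Your proposal is correct and follows essentially the same route as the paper: the paper's proof also reduces the claim to \cref{lem:singleSol} via the decomposition $F_m(T) = \frac{1}{|V|}\sum_{i\in V} v_i(T)$ and the closure of monotonicity and submodularity under nonnegative sums. Your additional explicit verification of the two endpoint evaluations is a harmless elaboration of what the paper leaves implicit.
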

\begin{proof}
This follows from \cref{lem:singleSol} and $\mathcal{I}_m(T) = F_m(v_T) = \frac{1}{|V|} \sum_{i\in V} (v_T)_i$.
\end{proof}

\begin{corollary} \label{c:Approximation}
A greedy algorithm gives a $1-1/e$ approximation to the maximization of $F_m$ over sets of a given size.
\end{corollary}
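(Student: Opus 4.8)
The plan is to recognize \cref{c:Approximation} as a direct instance of cardinality-constrained maximization of a nonnegative, monotone, submodular set function, and to apply the classical greedy analysis of Nemhauser, Wolsey, and Fisher. By \cref{t:submodular}, $F_m$ is monotone and submodular on the ground set $V \setminus Z$ and satisfies $F_m(\varnothing) \geq 0$; by monotonicity the constrained optima over $\{|T| \le t\}$ and $\{|T| = t\}$ coincide, so it suffices to analyze the greedy rule that sets $S_0 = \varnothing$ and, for $i = 0, \ldots, t-1$, forms $S_{i+1} = S_i \cup \{x_i\}$ where $x_i \in (V\setminus Z)\setminus S_i$ maximizes the marginal gain $\rho_{x}(S_i) := F_m(S_i \cup \{x\}) - F_m(S_i)$.

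The first key step is the standard ``greedy covers the optimum in one step'' inequality. Let $T^\star$ be an optimal feasible set, so $|T^\star| = t$ and $F_m(T^\star) = \text{OPT}$. For any $i$, monotonicity gives $F_m(T^\star) \le F_m(S_i \cup T^\star)$, and writing $S_i \cup T^\star$ as $S_i$ with the elements of $T^\star \setminus S_i$ adjoined one at a time, submodularity (diminishing returns) bounds each incremental gain by $\rho_x(S_i)$. This yields
\begin{equation*}
F_m(T^\star) - F_m(S_i) \;\le\; \sum_{x \in T^\star \setminus S_i} \rho_x(S_i) \;\le\; t \cdot \rho_{x_i}(S_i),
\end{equation*}
where the last inequality uses that $x_i$ is the greedy maximizer (and $T^\star \setminus S_i \subseteq (V\setminus Z)\setminus S_i$) together with $|T^\star \setminus S_i| \le t$. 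Since $\rho_{x_i}(S_i) = F_m(S_{i+1}) - F_m(S_i)$, rearranging gives the contraction
\begin{equation*}
\text{OPT} - F_m(S_{i+1}) \;\le\; \Big(1 - \tfrac{1}{t}\Big)\big(\text{OPT} - F_m(S_i)\big).
\end{equation*}

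Iterating from $S_0 = \varnothing$ produces $\text{OPT} - F_m(S_t) \le (1 - 1/t)^t\,(\text{OPT} - F_m(\varnothing))$, and using $(1-1/t)^t \le 1/e$ together with $F_m(\varnothing) \ge 0$ gives $F_m(S_t) \ge (1 - 1/e)\,\text{OPT} + \tfrac{1}{e} F_m(\varnothing) \ge (1 - 1/e)\,\text{OPT}$, which is the claimed guarantee. The argument is almost entirely routine once \cref{t:submodular} is in hand; the only point requiring a moment's care is that the textbook clean bound is usually stated for $F_m(\varnothing) = 0$, whereas here the base value may be strictly positive. I would therefore emphasize that the inductive contraction above never uses normalization and that a positive $F_m(\varnothing)$ only \emph{improves} the bound, which is exactly why \cref{t:submodular} records $F_m(\varnothing) \ge 0$ rather than an equality. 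Equivalently, one could pass to the normalized function $g(T) = F_m(T) - F_m(\varnothing)$, invoke the standard theorem verbatim, and then observe that the additive shift is nonnegative.
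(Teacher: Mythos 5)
Your proposal is correct and matches the paper's approach: the paper simply cites the Nemhauser--Wolsey--Fisher greedy guarantee for monotone submodular maximization under a cardinality constraint, while you reproduce its standard contraction proof in full (including the correct observation that $F_m(\varnothing)\geq 0$ suffices in place of normalization). No gaps.
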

\begin{proof}
The problem of maximizing a monotone submodular function $f$ subject to a cardinality constraint admits a $1-1/e$ approximation algorithm. The approximation is given by a greedy method, which starts with the empty set $S_0$, and in step $i$ adds the element of maximum increase to obtain $S_i$. Nemhauser et al.~\cite{nemhauser1978submodular} proved that $f(S_k) \geq (1-1/e) \max_{|S|\leq k} f(S)$.
\end{proof}

\section{A relaxation strategy for the targeting influence problem} \label{sec:Relax}
In \cref{s:authority}, we formulated a model for the targeting influence problem \eqref{e:OptProb}. This problem asks how an authority for opinion $m \in [k]$ would target network members in $V\setminus Z$ to maximize the proportion of influence that opinion $m$ has on the network. 
In \cref{s:binary}, we showed that this problem can be recast as a grouped opponent targeting influence problem \eqref{e:OptProb2}, where the authority doesn't have to consider all other opinions/ideas, but can lump them into a single group. In \cref{sec:HandA}, we showed that there exists an $\varepsilon > 0$, such that a $(1-\varepsilon)$-approximation is NP-hard, but that the problem can be  $(1-1/e)$-approximated with a polynomial-time algorithm. 
In this section we identify a convex relaxation of the NP-hard targeting influence problem.

For fixed $T \subset V\setminus Z$,  $\varepsilon>0$\creflastconjunction $m\in[k]$ 
we introduce the  problem 
\begin{subequations}
\label{e:LabelPropSysRelax}
\begin{align}
& L u(i) + \varepsilon^{-1} e_T(i) (u - e_m)(i) = 0,  && i \in V \setminus Z, \\ 
 &u(i) = e_\ell, && i \in Z_\ell, \ \ell \in [k]. 
\end{align}
\end{subequations}
Here, we have used the notation $e_T = \begin{cases} 1 & i \in T \\ 0 & i \notin T\end{cases}$ to be the characteristic vector for the set $T \subset V\setminus Z$.  Throughout, we will let $e := e_V$ to mean the vector of all $1$'s.

\begin{lemma} \label{l:RelaxConv}
As $\varepsilon \downarrow 0$, the solution, $ u_\varepsilon$ of \eqref{e:LabelPropSysRelax} converges to the solution $u$ of the graph Laplace problem 
\cref{e:OptProbB,e:OptProbC,e:OptProbD}; in particular, $u_\varepsilon(i) \to e_m$ for $i\in T$. 
\end{lemma}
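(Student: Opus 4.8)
The plan is to treat \eqref{e:LabelPropSysRelax} as a penalty (soft-constraint) regularization of the hard Dirichlet problem \cref{e:OptProbB,e:OptProbC,e:OptProbD} and to run the standard penalty-method limit: establish uniform a priori bounds on $u_\varepsilon$, extract subsequential limits by compactness, and identify the limit using uniqueness of the target problem. The key structural observation is that on $V\setminus(Z\cup T)$ the relaxed equation reads exactly $Lu_\varepsilon(i)=0$, while on $T$, after multiplying by $\varepsilon$, it reads $\varepsilon\,(Lu_\varepsilon)(i)+\bigl(u_\varepsilon(i)-e_m\bigr)=0$, which formally forces $u_\varepsilon(i)\to e_m$ as $\varepsilon\downarrow 0$.

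First I would establish well-posedness and a uniform bound. Writing $W=V\setminus Z$ and eliminating the fixed zealot values, $u_\varepsilon|_W$ solves a linear system with matrix $B_\varepsilon=L_{WW}+\varepsilon^{-1}\,\mathrm{diag}(e_T)$, where $L_{WW}$ is the principal submatrix of $L$ on $W$. Since $G$ is strongly connected and $Z\neq\varnothing$, $L_{WW}$ is a nonsingular M-matrix (row-weakly-diagonally-dominant, with strict dominance propagated along the rows carrying edges into $Z$), and adding the nonnegative diagonal penalty leaves $B_\varepsilon$ a nonsingular M-matrix; hence $B_\varepsilon^{-1}\geq 0$ entrywise. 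This simultaneously yields existence and uniqueness of $u_\varepsilon$ and a discrete maximum principle. Working componentwise as in \cref{l:HarmonicFact}, nonnegativity of the data together with $B_\varepsilon^{-1}\geq 0$ gives $(u_\varepsilon)_\ell\geq 0$; and summing the relaxed equations over $\ell\in[k]$ shows that $s:=\sum_\ell (u_\varepsilon)_\ell$ solves the same penalized problem with right-hand data $1$, whose unique solution is the constant $s\equiv 1$. Therefore $u_\varepsilon(i)\in\Delta_k$ for every $i\in V$ and every $\varepsilon>0$, a bound uniform in $\varepsilon$.

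Next I would pass to the limit. Because $\{u_\varepsilon\}_{\varepsilon>0}$ lives in the compact set $\Delta_k^{|V|}$ and $L$ is a fixed finite matrix, $Lu_\varepsilon$ is uniformly bounded and every sequence $\varepsilon_j\downarrow 0$ admits a subsequence with $u_{\varepsilon_j}\to u^\ast$. Passing to the limit in $Lu_{\varepsilon_j}(i)=0$ on $V\setminus(Z\cup T)$ gives $Lu^\ast(i)=0$ there; on $T$, uniform boundedness of $Lu_{\varepsilon_j}$ makes $\varepsilon_j\,(Lu_{\varepsilon_j})(i)\to 0$, so the rewritten equation forces $u^\ast(i)=e_m$; and $u^\ast(i)=e_\ell$ on $Z_\ell$. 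Thus $u^\ast$ solves \cref{e:OptProbB,e:OptProbC,e:OptProbD}, which (again by strong connectivity, since $Z\cup T\supseteq Z\neq\varnothing$) has a unique solution $u$. Hence $u^\ast=u$; since every subsequential limit equals $u$ and the family is bounded, the whole family converges, $u_\varepsilon\to u$, and in particular $u_\varepsilon(i)\to e_m$ for $i\in T$.

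The main obstacle is the singular coefficient $\varepsilon^{-1}$ on $T$: it is precisely this term that must be prevented from destroying boundedness, and the resolution is the uniform simplex bound above, which guarantees $\varepsilon\,(Lu_\varepsilon)\to 0$ on $T$. If a quantitative statement is desired, I would instead analyze $w_\varepsilon:=u_\varepsilon-u$ directly: it satisfies $Lw_\varepsilon=0$ off $Z\cup T$, $w_\varepsilon=0$ on $Z$, and $(Lw_\varepsilon)(i)+\varepsilon^{-1}w_\varepsilon(i)=-(Lu)(i)$ on $T$, i.e. $B_\varepsilon\,w_\varepsilon|_W=r$ with $r$ bounded and supported on $T$. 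Since the $(T,T)$ block of $B_\varepsilon$ has diagonal of order $\varepsilon^{-1}$, a Schur-complement estimate then gives $\|u_\varepsilon-u\|_\infty=O(\varepsilon)$, strengthening the asserted convergence.
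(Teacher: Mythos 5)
Your proof is correct. The paper does not actually write out a proof of \cref{l:RelaxConv}; it records only a one-sentence sketch, namely to consider the difference between $u_\varepsilon$ and $u$ and apply a discrete maximum principle on $V\setminus Z$. Your primary route is genuinely different: a penalty-method compactness argument in which the uniform simplex bound $u_\varepsilon(i)\in\Delta_k$ (obtained exactly as in \cref{l:HarmonicFact}, using that $B_\varepsilon=L_{WW}+\varepsilon^{-1}\mathrm{diag}(e_T)$ is an inverse-positive nonsingular M-matrix because $G$ is strongly connected and $Z\neq\varnothing$) supplies $\varepsilon$-uniform boundedness, every subsequential limit is identified with the unique solution of \cref{e:OptProbB,e:OptProbC,e:OptProbD} by passing to the limit in the equation (after multiplying by $\varepsilon$ on $T$, which immediately forces $u_\varepsilon(i)\to e_m$ there), and uniqueness of the limit problem upgrades subsequential to full convergence. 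This buys a complete, self-contained qualitative proof and also settles well-posedness of \eqref{e:LabelPropSysRelax}, which the paper leaves implicit. Your closing paragraph then essentially carries out the paper's sketched approach --- the residual equation $B_\varepsilon\,w_\varepsilon|_W=r$ with $r$ supported on $T$ and bounded by $\|Lu\|_\infty$, combined with the observation that the $(T,T)$ block of $B_\varepsilon$ carries an $\varepsilon^{-1}$ diagonal --- and this buys the stronger quantitative conclusion $\|u_\varepsilon-u\|_\infty=O(\varepsilon)$, which the compactness argument alone does not give. Both routes rest on the same structural facts (reachability of $Z$ from every vertex, hence invertibility and inverse-positivity of the relevant principal submatrices of $L$), and both are sound; your write-up is strictly more detailed than what appears in the paper.
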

\cref{l:RelaxConv} can be proved by considering the difference between solutions $u_\varepsilon$ and $u$, and applying a discrete maximum principle on $V \setminus Z$.

For $\varepsilon > 0$, a relaxed formulation of the targeting influence problem \eqref{e:OptProb} is then given by 
\begin{subequations}
\label{e:OptProbRelax}
\begin{align}
\max_{\phi\colon V \to \mathbb R}  \ & \mathcal{I}_m(u) \\
\label{e:OptProbRelaxA}
\textrm{s.t.} \ &  L u(i)  + \varepsilon^{-1} \phi \odot (u - e_m)(i) = 0,  && i \in V \setminus Z, \\ 
\label{e:OptProbRelaxB}
 & u(i) = e_\ell, && i \in Z_\ell, \ \ell \in [k], \\ 
\label{e:OptProbRelaxC}
& \phi \geq 0, \ 
\phi\!\mid_Z = 0,  \ 
\sum_{i\in V} \phi(i) = 1. 
\end{align}
\end{subequations}
Here, we have replaced the characteristic function $e_T\colon V \to \{0,1\}^n$ for $T \subset V \setminus Z$, $|T| = t$ by the function $\phi \colon V \to \mathbb R_+$, that is constrained to be supported on the set $V \setminus Z$ and satisfies  $\sum_{i\in V} \phi(i) = 1$.

\begin{remark}
In \eqref{e:OptProbRelaxC}, we have constrained $\phi$  to satisfy $\|\phi\|_{\ell^1(V)} =1$. 
Comparing to \eqref{e:OptProbE}, one might think that this constraint should be $\|\phi\|_{\ell^1(V)} =t$. However, this problem only depends on the ratio $\varepsilon/t$, so we can set $t=1$ for convenience. This invariance signifies a shortcoming of the relaxed model; we do not expect that $\phi$ is supported on exactly $t$ vertices. 
\end{remark}

Finally, as described in \cref{s:binary}, we can reduce this to an equivalent, us-vs-them binary problem. Writing $v = u_m$ and again abusing notation by writing 
$\mathcal{I}_m(v) = \frac{1}{|V|} \|v\|_{\ell^1(V)}$, we have 
\begin{subequations}
\label{e:OptProb2Relax}
\begin{align}
\label{e:OptProb2RelaxA}
\max_{\phi\colon V \to \mathbb R}  \ & \mathcal{I}_m(v), \\
\label{e:OptProb2RelaxB}
\textrm{s.t.} \ &  L v(i)  + \varepsilon^{-1} \phi \odot (v - e)(i) = 0,  && i \in V \setminus Z, \\ 
\label{e:OptProb2RelaxC}
 & v(i) = 1, && i \in Z_m, \\
 \label{e:OptProb2RelaxD}
 & v(i) = 0, && i \in Z_\ell, \ \ell\neq m, \\ 
\label{e:OptProb2RelaxE}
& \phi \geq 0, \ 
\phi\!\mid_Z = 0,  \ 
\sum_{i\in V} \phi(i) = 1. 
\end{align}
\end{subequations}
Equation \eqref{e:OptProb2Relax} is a relaxed version of the targeting influence problem that we will solve computationally. 
If $\phi^\star$ is the solution to \eqref{e:OptProb2Relax}, we  obtain a set $T \subset V \setminus Z$ by choosing the largest $t$ values of $\phi^\star$, arbitrarily breaking ties if necessary.

The following theorem shows that the objective function is strictly concave in $\phi$ and gives expressions for the gradient and Hessian. 
To state the theorem, we use the following notation. The function $\phi$ is constrained to have support on $Z^c = V\setminus Z$, so it is useful to partition the variables according to the sets $\{Z,Z^c\}$. 
For $v \in \mathbb R^{|V|}$, 
we write $v_c = v|_{Z^c}$ to denote the entries corresponding to  $Z^c$. 
For $L \in \mathbb R^{|V| \times |V|}$, 
we write $L_{c,c}$ to denote the principle submatrix corresponding to 
$Z^c \times Z^c$.

\begin{theorem}
\label{t:Convexity}
Let $G=(V,E)$ be a strongly connected, directed graph. 
For $\phi\colon V \to \mathbb R$ satisfying the constraints in \eqref{e:OptProb2RelaxE}, let $v_\phi$ denote the solution to equation  \eqref{e:OptProb2RelaxB} satisfying the Dirichlet boundary conditions \eqref{e:OptProb2RelaxC} and \eqref{e:OptProb2RelaxD}.
The map $\phi \mapsto \mathcal{I}_m(v_\phi)$ is strictly concave and twice differentiable. The gradient is given by 
\begin{equation}
\label{e:grad_dir}
\nabla_\phi \mathcal{I}_m \mid_c = \frac{1}{|V|} (e-v_c) \odot w_c
\end{equation}
where $w_c \colon V\setminus Z \to \mathbb R$ is defined by
$w_c := \varepsilon^{-1}\left[ L_{c,c} + \varepsilon^{-1} \phi_c \right]^{-T} e$.
The Hessian is given by
\begin{equation}
\label{e:Hessian_dir}
\nabla^2_\phi \mathcal{I}_m \mid_{c,c} \ = \  -  \frac{2}{\varepsilon |V|}  \  \textrm{sym}\left(  \left[ L_{c,c} + \varepsilon^{-1} \phi_c \right]^{-1}  \odot [ w_c \otimes (e-v_c) ] \right), 
\end{equation}
where $\textrm{sym}(A) = (A + A^T)/2$. 
\end{theorem}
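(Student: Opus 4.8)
The plan is to reduce the concavity claim to a positive-definiteness statement about an M-matrix, and to obtain the derivative formulas by sensitivity (adjoint) calculus.

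First I would record well-posedness and pass to a convenient variable. For $\phi\geq 0$ the matrix $M_\phi:=L_{c,c}+\varepsilon^{-1}\phi_c$ (with $\phi_c$ read as a diagonal matrix) is a nonsingular M-matrix: $L_{c,c}$ is the principal submatrix of the strongly connected Laplacian $L$ obtained by deleting the rows and columns in $Z$, hence an irreducible nonsingular M-matrix with $L_{c,c}^{-1}>0$ entrywise, and adding the nonnegative diagonal $\varepsilon^{-1}\phi_c$ preserves this. Thus $v_\phi$ is well defined and, since $\phi\mapsto M_\phi^{-1}$ is real-analytic, twice differentiable. Using $Le=0$ (zero row sums) and the boundary data, the complementary vector $y:=(e-v_\phi)\!\mid_c$ solves $M_\phi\,y=a$, where $a\geq 0$ collects the edge weights from each interior node to $Z'=\cup_{\ell\neq m}Z_\ell$; by \cref{l:HarmonicFact} and the maximum principle $0\leq v_\phi\leq e$, so $y\geq 0$, and strong connectivity with $Z'\neq\varnothing$ gives $y>0$. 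Since $\mathcal I_m(v_\phi)=\tfrac1{|V|}\bigl(|V|-|Z'|-e^T y\bigr)$, concavity of $\phi\mapsto\mathcal I_m(v_\phi)$ is equivalent to convexity of $f(\phi):=e^T M_\phi^{-1}a$.

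Second, I would compute the gradient and Hessian by differentiating $M_\phi v_\phi = b(\phi)$ implicitly. Along a direction $\psi$, using $\partial_s M_\phi=\varepsilon^{-1}\mathrm{diag}(\psi)$, one gets $\dot v_c=\varepsilon^{-1}M_\phi^{-1}[\psi\odot(e-v_c)]$, whence $\nabla_\phi\mathcal I_m\!\mid_c=\tfrac1{|V|}(e-v_c)\odot w_c$ with the adjoint vector $w_c=\varepsilon^{-1}M_\phi^{-T}e$ arising naturally; differentiating once more yields the stated Hessian $-\tfrac{2}{\varepsilon|V|}\,\mathrm{sym}\!\bigl(M_\phi^{-1}\odot[w_c\otimes(e-v_c)]\bigr)$. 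This step is routine matrix calculus; the only care needed is the bookkeeping between $v$ and $y=e-v$ and the adjoint identity $M_\phi^T w_c=\varepsilon^{-1}e$.

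The crux — and the step I expect to be the main obstacle — is strict definiteness, i.e.\ that the Hessian is negative definite (equivalently $f$ strictly convex). The quadratic form evaluates to $\psi^T\nabla^2_\phi\mathcal I_m\,\psi=-\tfrac{2}{\varepsilon^2|V|}\,\psi^T\mathrm{sym}(D_uM_\phi^{-1}D_y)\psi$, where $D_u=\mathrm{diag}(u)$, $D_y=\mathrm{diag}(y)$, $u=M_\phi^{-T}e>0$ and $y>0$; so it suffices to show $\mathrm{sym}(D_uM_\phi^{-1}D_y)\succ 0$. Mere entrywise positivity of $M_\phi^{-1}$, $u$, $y$ is \emph{not} enough (a generic such form need not be definite); the defining relations must be exploited. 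My plan is: (i) a nonsingular matrix has positive-definite symmetric part iff its inverse does, so it is equivalent to show $B^{-1}:=D_y^{-1}M_\phi D_u^{-1}$ is accretive; (ii) the congruence $x=D_u\xi$ reduces this to $\mathrm{sym}(D_rM_\phi)\succ0$ with $r_i=u_i/y_i>0$; (iii) $D_rM_\phi$ is a Z-matrix (positive diagonal, nonpositive off-diagonals), and the key computation $\mathrm{sym}(D_rM_\phi)\,y=\tfrac12\bigl(D_r a+e\bigr)>0$, which uses exactly $M_\phi y=a$ together with $D_r y=u$ and $M_\phi^T u=e$; (iv) a symmetric Z-matrix admitting a strictly positive vector $y$ with $\mathrm{sym}(D_rM_\phi)y>0$ is a nonsingular symmetric M-matrix, hence positive definite. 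The essential trick is the choice of test vector $h=y$, which couples both sensitivity relations $M_\phi y=a$ and $M_\phi^T u=e$ at once; the points needing care are the M-matrix criterion in step (iv) and the positivity $u_i,y_i>0$ (from strong connectivity, so that $r$ is well defined).
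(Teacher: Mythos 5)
Your derivation of the gradient and Hessian by implicit differentiation of the Schur-complement equation is the same as the paper's, and both formulas check out (in particular $M_\phi^{-1}\odot[w_c\otimes(e-v_c)]=D_{w_c}M_\phi^{-1}D_{e-v_c}$, so your quadratic-form rewriting is exact). For the concavity claim, however, you take a genuinely different route. The paper expands the objective in a Neumann series, $|V|\,\mathcal{I}_m=|Z^c|-\sum_{k\ge0}e^T(X^{-1}A_{c,c})^kX^{-1}f$ with $f\ge0$ and $X$ affine in $\phi$, and shows each summand is convex because it is a nonnegative combination of functions $\prod_i x_i^{-\alpha_i}$; strictness is extracted from the $k=0$ term. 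Your argument instead works directly on the Hessian: reduce $\mathrm{sym}(D_uM_\phi^{-1}D_y)\succ0$ via the inverse ($\mathrm{sym}(B)\succ0\iff\mathrm{sym}(B^{-1})\succ0$, which is correct) and a congruence to $\mathrm{sym}(D_rM_\phi)\succ0$, then verify that this symmetric Z-matrix satisfies $\mathrm{sym}(D_rM_\phi)\,y=\tfrac12(D_ra+e)>0$ using the two sensitivity identities $M_\phi y=a$ and $M_\phi^Tu=e$, and invoke the standard M-matrix characterization. This is a clean and correct chain, it avoids the convergence bookkeeping of the series, and it isolates exactly where definiteness comes from.

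One caveat: your claim that strong connectivity of $G$ together with $Z'\neq\varnothing$ forces $y=e-v_c>0$ is not correct as stated. On the directed $4$-cycle $z'\to a\to b\to z_m\to z'$ one has $v(a)=v(b)=1$ for every feasible $\phi$, so $y$ vanishes, $\mathcal{I}_m$ is constant, and strict concavity fails; what is actually needed is that every non-zealot vertex can reach $Z'$ without first entering $Z_m$ (equivalently $a=A_{c,Z'}e$ is reachable from every interior node within $Z^c$). This is an unstated hypothesis of the theorem itself rather than a defect peculiar to your argument --- the paper's strictness step (the $k=0$ contribution, a diagonal matrix with entries proportional to the number of edges into $Z'$) silently requires the same thing --- but you should state the condition rather than attribute it to strong connectivity. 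With that hypothesis made explicit, your proof is complete.
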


\begin{proof}
 We first compute the gradient and Hessian. We additionally need the notation that the submatrix of $L$ corresponding to $V\setminus Z \times Z_m$ is denoted by $L_{c,m}$. 
We can then rewrite \eqref{e:OptProb2RelaxB}-\eqref{e:OptProb2RelaxD} for the unknown $v_c$ as  
\begin{equation}
\label{e:partEq}
L_{c,c} v_c + L_{c,m} e + \varepsilon^{-1} \phi_c \odot \left( v_c - e \right) =0. 
\end{equation}
We can write the solution $v_c$ in terms of the Schur complement 
\begin{equation}
    \label{e:SchurSol}
v_c =  [L_{c,c} + \varepsilon^{-1}\phi_c]^{-1}  \left( \varepsilon^{-1} \phi_c  - L_{c,m}e \right).
\end{equation}

Differentiating \eqref{e:partEq} with respect to $\phi_j$, with $j \in Z^c$ we obtain 
$$
\left( L_{c,c} +  \varepsilon^{-1} \phi_c \right) \frac{\partial v_c}{\partial \phi_j} =  
 \varepsilon^{-1} e_j \odot \left( e - v_c \right) 
$$
giving that 
$$
\frac{\partial v_c}{\partial \phi_j} =  
 \varepsilon^{-1} [\left( L_{c,c} +  \varepsilon^{-1} \phi_c \right)]^{-1} \left( e_j \odot \left( e - v_c \right) \right). 
$$
Thus, 
\begin{align*}
\frac{\partial \mathcal{I}_m}{\partial \phi_j}
&= \frac{1}{|V|} \sum_{i \in Z^c} \frac{\partial v}{\partial \phi_j} (i) 
= \frac{1}{|V|} \left\langle  e, \frac{\partial v_c}{\partial \phi_j} \right\rangle \\
&= \varepsilon^{-1} \frac{1}{|V|} \left\langle e,  \left[ L_{c,c} + \varepsilon^{-1} \phi_c \right]^{-1} \left( e_j \odot \left( e - v_c  \right) \right) \right\rangle\\ 
&= \varepsilon^{-1} \frac{1}{|V|} \left\langle \left( e - v_c  \right) \odot \left[ L_{c,c} + \varepsilon^{-1} \phi_c \right]^{-T} e  ,  e_j \right\rangle \\
&= \frac{1}{|V|} \left\langle \left( e - v_c  \right) \odot w_c, e_j \right\rangle. 
\end{align*}
This gives that 
$\nabla_{\phi} J \mid_c =  \frac{1}{|V|} \left( e - v_c \right) \odot w_c$, 
as desired. 

Differentiating \eqref{e:partEq} a second time with respect to $\phi_k$ with $k \in Z^c$ yields 
$$
\left( L_{c,c} +  \varepsilon^{-1} \phi_c \right) \frac{\partial^2 v_c}{\partial \phi_j \partial \phi_k} =  
- \varepsilon^{-1} \left( e_k \odot \frac{\partial v_c}{\partial \phi_j} + e_j \odot \frac{\partial v_c}{\partial \phi_k} \right). 
$$
Solving for $\frac{\partial^2 v_c}{\partial \phi_j \partial \phi_k}$ and using the definition of $\mathcal{I}_m$, we have that 
\begin{align*}
\frac{\partial^2 \mathcal{I}_m}{\partial \phi_j \partial \phi_k} 
&= \frac{1}{|V|} \left\langle e, \frac{\partial^2 v_c}{\partial \phi_j \partial \phi_k} \right\rangle \\
&= - \frac{1}{|V|} \left\langle w_c, e_k \odot \frac{\partial v_c}{\partial \phi_j} + e_j \odot \frac{\partial v_c}{\partial \phi_k} \right\rangle \\ 
&= - \frac{1}{|V|} \left( \langle w_c, e_k\rangle \left\langle e_k, \frac{\partial v_c}{\partial \phi_j}  \right\rangle + \langle w_c, e_j \rangle  \left\langle e_j,  \frac{\partial v_c}{\partial \phi_k} \right\rangle \right),
\end{align*}
where we have used the property that $\langle a,e_k \odot b\rangle = \langle a,e_k \rangle \langle b, e_k\rangle$. 
We then use this property again to compute 
\begin{align*}
\left\langle e_k,  \frac{\partial v_c}{\partial \phi_j} \right\rangle 
& = \varepsilon^{-1} \left\langle \left[ L_{c,c} + \varepsilon^{-1} \phi_c \right]^{-T} e_k, e_j \odot (e-v_c) \right\rangle \\
& = \varepsilon^{-1} \left\langle \Phi e_k, e_j   \right\rangle \langle e_j, e-v_c \rangle,
\end{align*}
where we have written $\Phi :=  \left[ L_{c,c} + \varepsilon^{-T} \phi_c \right]^{-1}$.
Substituting this above, we have 
\begin{align*}
\frac{\partial^2 \mathcal{I}_m}{\partial \phi_j \partial \phi_k} 
&= -  \frac{1}{\varepsilon |V|} \left( \langle w_c, e_k\rangle \left\langle \Phi e_k, e_j   \right\rangle \langle e_j, e-v_c \rangle + \langle w_c, e_j \rangle  \left\langle \Phi e_j, e_k   \right\rangle \langle e_k, e-v_c \rangle\right) \\
&= \langle e_j, H e_k \rangle,
\end{align*}
where 
$H = -  \frac{1}{\varepsilon |V|} \left( \Phi \odot  [(e-v_c) \otimes w_c ]+ \Phi^T \odot [ w_c \otimes (e-v_c)  ] \right) $ as desired. 

We now prove strict concavity. Our argument follows the proof of  in \cite[Prop. 2.4]{Boyd2023}, using the Neumann series (i.e., $(I-T)^{-1} = \sum_{k=0}^\infty T^k$) in the Schur complement expression of the solution \eqref{e:SchurSol} and investigating convexity of each term. 
Writing 
$x = d_{cc} + \varepsilon^{-1} \phi$
and $X = \textrm{diag}(x) = D_{c,c} + \varepsilon^{-1} {\rm diag} (\phi)$ and noting $L_{c,m} = - A_{c,m}$, we have
\begin{subequations}
\begin{align} 
|V| \mathcal{I}_m (v_\phi) 
&= \langle e,v \rangle \\ 
&=  \left\langle e, \  \left( X - A_{c,c} \right)^{-1} ( \varepsilon^{-1} \phi- L_{c,m}e)  \right\rangle \\
&=  \left\langle e, \  \left( I - X^{-1} A_{c,c} \right)^{-1} X^{-1} ( \varepsilon^{-1} \phi+ A_{c,m}e)  \right\rangle \\
&= \sum_{k=0}^{\infty} \left\langle e, \   \left(X^{-1} A_{c,c} \right)^k X^{-1} ( \varepsilon^{-1} \phi + A_{c,m} e) \right\rangle.
\label{eq:J1}
\end{align}
\end{subequations}

Let us reframe the problem in terms of the variable $x$.  Then,~\eqref{eq:J1} becomes
\begin{align*}
   |V|  E(x) & =\left\langle e, \  \left( I - X^{-1} A_{c,c} \right)^{-1} X^{-1} ( x - d_{cc} + A_{c,m}e)  \right\rangle \\
    & = \left\langle e, \sum_{k=0}^\infty (X^{-1} A_{c,c})^k (e -X^{-1}(d_{cc} + A_{c,m} e)) \right\rangle \\
    & = \left\langle e,  \sum_{k=0}^\infty (X^{-1} A_{c,c})^ke -\sum_{k=0}^\infty (X^{-1} A_{c,c})^kX^{-1}(d_{cc} + A_{c,m} e) \right\rangle \\
    & = \langle e, e \rangle + \left\langle  e, \sum_{k=1}^\infty (X^{-1} A_{c,c})^ke -\sum_{k=0}^\infty (X^{-1} A_{c,c})^kX^{-1}(d_{cc} + A_{c,m} e) \right\rangle \\
    & = |Z^c| + \left\langle  e, \sum_{k=0}^\infty (X^{-1} A_{c,c})^k X^{-1} A_{c,c} e -\sum_{k=0}^\infty (X^{-1} A_{c,c})^kX^{-1}(d_{cc} + A_{c,m} e) \right\rangle \\
    & = |Z^c| + \left\langle  e, \sum_{k=0}^\infty (X^{-1} A_{c,c})^k X^{-1} \left( A_{c,c} e - d_{cc} - A_{c,m} e \right) \right\rangle.
\end{align*}
Notice that $A_{c,c} e - (d_{c,c} + A_{c,m}) e$ is nonpositive because $A_{c,c}e$ is the degree of each node, counting only neighbors in $c$, whereas $d_{c,c}$ is the full degree. 

We first show concavity of $E(x)$. Since the sum of convex functions is convex, it is enough to show that 
\[e^T \left( X^{-1} A_{c,c} \right)^k X^{-1} f\]
is convex for each $ k > 0 $, for all nonnegative vectors $f$. 
For fixed $k$, and using the fact that $e$, $f$\creflastconjunction $A_{c,c}$ are nonnegative, the preceding term is a nonnegative linear combination of terms of the form
\[ F(x) = \prod_i x_i^{-\alpha_i}.\]
Therefore, we seek to show that $F$ is convex for any $\alpha = (\alpha_1,\cdots,\alpha_n)$ with each $\alpha_j \geq 0$. 
To prove this, we check whether the Hessian is semi-positive definite. Computing second derivatives gives
$$
\frac{\partial^2 F}{ \partial x_i \partial x_i} (x) = F(x) \alpha_i (\alpha_i + 1) x_i^{-2}
$$
and 
$$
\frac{\partial^2 F}{ \partial x_i \partial x_j} (x) = F(x) \alpha_i \alpha_j  x_i^{-1} x_j^{-1}.
$$
So the Hessian of $F$ is 
$$
F(x)  \left[ (\alpha \odot x^{-1}) (\alpha \odot x^{-1})^T + \textrm{diag}(\alpha \odot x^{-2})\right]. 
$$
The first term is positive semidefinite (PSD) since it is the outer product of a vector with itself and the second term is PSD since it is diagonal with positive entries. Thus, the Hessian of $F$ is PSD.

To prove positive definiteness (rather than semi-definiteness), recognize that the $k = 0$ term contributes a term to the Hessian of the form $DX^{-2}$, which is strictly positive definite on the domain in question. This proves strict convexity.
\end{proof}

\begin{remark}
In \cref{s:energyInterp}, we gave an 
energy interpretation in the case where the graph is undirected. 
For fixed $T \subset V\setminus Z$,  $\varepsilon>0$\creflastconjunction $m\in[k]$, define the relaxed energy 
$$
E_{m, \varepsilon} (u; T) = E(u) + \frac{1}{2 \varepsilon} \sum_{i \in T} | u(i) - e_m |^2 . 
$$
The minimizers of this energy satisfy \eqref{e:LabelPropSysRelax}. 
Furthermore, as $\varepsilon \downarrow 0$, the minimizer $u = u_\varepsilon$ will converge to the solution of the graph Laplace problem 
\cref{e:OptProbB,e:OptProbC,e:OptProbD}. In particular, $u_\varepsilon(i) \to e_m$ for $i\in T$.
\end{remark}

\subsection{Targeting influence on symmetric graphs} \label{sec:GrAut}

In this section, we consider the relaxation of the grouped opponent targeting influence problem \eqref{e:OptProb2Relax} in the case when the graph has a symmetry and prove that the optimization shares the symmetry. 

Recall that an \emph{automorphism} of a graph $(V,E)$ is a permutation $\pi \colon V \to V$ such that $(i,j) \in E$ if and only if $(\pi(i),\pi(j)) \in E$. An automorphism $\pi$ induces a linear transformation  $P_\pi$, on $\ell^2(V)$, defined by $(P_\pi(u)))_i = u_{\pi(i)}$, $u \in \ell^2(V)$, $i \in V$. The matrix $P_\pi \in \mathbb R^{|V|\times |V|}$ is a permutation matrix. 

\begin{lemma} 
Suppose that $\pi$ is an automorphism on $(V,E)$ and let $P_\pi$ be the induced linear transformation. 
Suppose $S\subset V$ is a permutation invariant set, i.e. $\pi S = S$.  
Let $f_i \in \ell^2(V\setminus S)$, $i=1,2$, and $f_3 \in \ell^2(S)$ and extend these functions by 0 to $V$ and abuse notation by denoting the extensions by $f_i \in \ell^2(V)$, $i=1,2,3$. 
Assume $f_i \geq 0$ and are invariant to $P_\pi$, $i=1,2,3$, i.e., $P_\pi f_i = f_i$. 
Consider the equation 
\begin{subequations}
\label{e:GenEq}
\begin{align}
Lv(i)+ f_1(i) v(i) & = f_2 (i), && i \in V\setminus S, \\
v(i) &= f_3(i), && i \in S. 
\end{align}
\end{subequations}
Then the solution  $v$ is also invariant to $P_\pi$.
\end{lemma}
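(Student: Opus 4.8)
The plan is to prove this by a standard symmetrization-plus-uniqueness argument: show that $w := P_\pi v$ solves the very same boundary value problem \eqref{e:GenEq}, and then invoke uniqueness of the solution to conclude $w = v$. The whole argument rests on the fact that a graph automorphism leaves the Laplacian equivariant.

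First I would record the key commutation relation $P_\pi L = L P_\pi$. Since $\pi$ is an automorphism we have $A_{\pi(i)\pi(j)} = A_{ij}$ and $d_{\pi(i)} = d_i$, hence $L_{\pi(i)\pi(j)} = L_{ij}$. A one-line index computation then gives, for any $u$, that $(P_\pi L u)(i) = (L u)(\pi(i)) = \sum_{j} L_{\pi(i),j} u(j) = \sum_{j'} L_{i,j'} u(\pi(j')) = (L P_\pi u)(i)$, using the substitution $j = \pi(j')$ and $L_{\pi(i),\pi(j')} = L_{i,j'}$. I will also use the elementary fact that $P_\pi$ distributes over Hadamard products, $P_\pi(f_1 \odot v) = (P_\pi f_1) \odot (P_\pi v) = f_1 \odot (P_\pi v)$, where the second equality uses the invariance hypothesis $P_\pi f_1 = f_1$.

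Next, the crux of the argument: uniqueness of the solution to \eqref{e:GenEq}. Writing $c = V \setminus S$ and partitioning according to $\{S, c\}$, the system reduces to $\bigl(L_{c,c} + \textrm{diag}(f_1)_{c,c}\bigr) v_c = f_2\!\mid_c - L_{c,S}\, f_3\!\mid_S$. Because $G$ is strongly connected, $L$ has zero row sums and is weakly (row) diagonally dominant; deleting the rows and columns indexed by $S$ makes the dominance strict in every row with a neighbor in $S$, and strong connectivity propagates this so that $L_{c,c}$ is a nonsingular M-matrix (a weakly chained diagonally dominant matrix). Adding the nonnegative diagonal $\textrm{diag}(f_1) \geq 0$ preserves nonsingularity, so $L_{c,c} + \textrm{diag}(f_1)_{c,c}$ is invertible and the solution $v$ is unique. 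This invertibility step is the main obstacle: everything else is bookkeeping, but one must take care that the perturbed grounded Laplacian is genuinely nonsingular under the stated hypotheses.

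Finally I would verify that $w = P_\pi v$ satisfies \eqref{e:GenEq}. For $i \in V\setminus S$, the commutation relation gives $(Lw)(i) = (Lv)(\pi(i))$, and since $f_1(\pi(i)) = f_1(i)$ the left-hand side equals $(Lv)(\pi(i)) + f_1(\pi(i))\, v(\pi(i))$. Here the invariance $\pi S = S$ is essential: it forces $\pi(i) \in V\setminus S$, so the original equation holds at $\pi(i)$ and yields $(Lv)(\pi(i)) + f_1(\pi(i))\, v(\pi(i)) = f_2(\pi(i)) = f_2(i)$, using $P_\pi f_2 = f_2$. For $i \in S$ we have $\pi(i) \in S$, whence $w(i) = v(\pi(i)) = f_3(\pi(i)) = f_3(i)$ by $P_\pi f_3 = f_3$. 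Thus $w$ solves \eqref{e:GenEq}, and uniqueness gives $P_\pi v = w = v$, i.e. $v$ is $P_\pi$-invariant.
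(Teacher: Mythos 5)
Your proposal is correct and follows essentially the same route as the paper: show $P_\pi$ commutes with $L$, verify that $P_\pi v$ solves the same boundary value problem (using invariance of $S$ and the $f_i$), and conclude by uniqueness. The only difference is that you explicitly justify the uniqueness/invertibility of the grounded operator $L_{c,c}+\mathrm{diag}(f_1)_{c,c}$ via weakly chained diagonal dominance, a step the paper simply asserts; this is a welcome addition but not a different argument.
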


\begin{proof}
Since $\pi$ is an automorphism, $P_\pi$ commutes with the adjacency matrix $A$, the degree matrix $D$, and hence the graph Laplacian $L = D-A$. Multiplying both equations in \eqref{e:GenEq} by $P_\pi$ and using commutativity, we have 
\begin{align*}
L (P_\pi v) (i)+ (P_\pi f_1)(i) (P_\pi v) (i) & = (P_\pi f_2) (i), && i \in V\setminus S, \\
(P_\pi v) (i) &= (P_\pi f_3)(i), && i \in S, 
\end{align*}
which shows that $P_\pi v$ is also a solution to \eqref{e:GenEq}.  By uniqueness of the solution to \eqref{e:GenEq}, we have that $v = P_\pi v$. 
\end{proof}

We similarly consider the relaxation of the grouped opponent  targeting influence problem \eqref{e:OptProb2Relax}.

\begin{proposition}
\label{p:GraphAuto}
Let $\pi$ be an automorphism on the strongly connected, directed graph $(V,E)$ with induced linear transformation $P_\pi$. 
Fix zealot sets $Z = \amalg_{\ell \in [k]} \ Z_\ell$. 
Suppose that $Z_m$ and $\cup_{\ell \neq m} Z_\ell$ are permutation invariant sets. 
Fix $\varepsilon > 0$ and consider \eqref{e:OptProb2Relax}. The unique solution $\phi^\star$ is invariant to $P_\pi$.
\end{proposition}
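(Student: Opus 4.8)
The plan is to combine the strict concavity established in \cref{t:Convexity} with an equivariance property of the solution map $\phi \mapsto v_\phi$, and then run the standard symmetry-plus-uniqueness argument. First I would record that the feasible set cut out by \eqref{e:OptProb2RelaxE} is invariant under $P_\pi$: since $Z_m$ and $\cup_{\ell\neq m}Z_\ell$ are permutation invariant, so is their union $Z$ and hence $Z^c = V\setminus Z$; consequently $P_\pi\phi \geq 0$, $(P_\pi\phi)\!\mid_Z = 0$, and $\sum_{i\in V}(P_\pi\phi)(i) = \sum_{i\in V}\phi(i) = 1$ whenever $\phi$ is feasible. Because $\phi \mapsto \mathcal{I}_m(v_\phi)$ is strictly concave on this convex, compact set, it has a unique maximizer $\phi^\star$.

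The heart of the argument is to show that the solution map is equivariant, i.e.\ that $v_{P_\pi\phi} = P_\pi v_\phi$. I would establish this by mirroring the commutativity computation in the preceding lemma. Writing $\tilde v = P_\pi v_\phi$ and using that $P_\pi$ commutes with $A$, with $D$, and hence with $L = D-A$, one applies $P_\pi$ to \eqref{e:OptProb2RelaxB}: for $i \in Z^c$ (which is invariant), evaluating the equation at $\pi(i)$ gives $(L\tilde v)(i) + \varepsilon^{-1}(P_\pi\phi)(i)\,(\tilde v - e)(i) = 0$, so $\tilde v$ solves the interior equation with parameter $P_\pi\phi$ in place of $\phi$. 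For the boundary conditions, invariance of $Z_m$ forces $\tilde v(i) = v_\phi(\pi(i)) = 1$ on $Z_m$, and invariance of $\cup_{\ell\neq m}Z_\ell$ forces $\tilde v(i) = 0$ there, matching \eqref{e:OptProb2RelaxC}--\eqref{e:OptProb2RelaxD}. By uniqueness of the solution to this linear system, $v_{P_\pi\phi} = P_\pi v_\phi$.

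With equivariance in hand the objective is immediately invariant: since permuting coordinates preserves the $\ell^1$ norm, $\mathcal{I}_m(v_{P_\pi\phi}) = \frac{1}{|V|}\|P_\pi v_\phi\|_{\ell^1(V)} = \frac{1}{|V|}\|v_\phi\|_{\ell^1(V)} = \mathcal{I}_m(v_\phi)$. Thus if $\phi^\star$ is the maximizer then $P_\pi\phi^\star$ is feasible and achieves the same optimal value, so it too is a maximizer; by the strict concavity of \cref{t:Convexity} the maximizer is unique, and therefore $\phi^\star = P_\pi\phi^\star$, which is exactly the claimed $P_\pi$-invariance.

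The main point requiring genuine care is the boundary-condition bookkeeping in the equivariance step: it is essential that $Z_m$ and $\cup_{\ell\neq m}Z_\ell$ be invariant \emph{separately}, not merely that $Z$ is invariant, since otherwise $\pi$ could send a node of $Z_m$ (value $1$) to a node of some $Z_\ell$ (value $0$) and $\tilde v$ would violate the Dirichlet data, breaking the argument. Everything else — commutativity of $P_\pi$ with $L$, and permutation-invariance of both the simplex constraint and the $\ell^1$ norm — is routine, so the real content lies in reducing the statement to the equivariance identity and then invoking strict concavity to upgrade ``$P_\pi\phi^\star$ is also a maximizer'' to ``$P_\pi\phi^\star = \phi^\star$.''
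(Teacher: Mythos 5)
Your proof is correct, and it reaches the conclusion by a slightly different route than the paper. The paper symmetrizes over the whole cyclic group generated by $\pi$: it forms the orbit average $\tilde\phi = \frac{1}{S}\sum_{s\in[S]}P_\pi^s\phi^\star$, argues that the corresponding state is $\frac{1}{S}\sum_s P_\pi^s v^\star$, and then uses strict concavity together with Jensen's inequality to conclude that $\tilde\phi$ would strictly beat $\phi^\star$ unless $\tilde\phi=\phi^\star$, contradicting optimality. You instead apply the single permutation $P_\pi$ to $\phi^\star$, prove the equivariance identity $v_{P_\pi\phi}=P_\pi v_\phi$ via commutativity of $P_\pi$ with $L$ and uniqueness of the linear system, observe that the objective and feasible set are permutation invariant, and then invoke uniqueness of the maximizer of a strictly concave function on a convex compact set. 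Both arguments rest on exactly the same two pillars (strict concavity from \cref{t:Convexity} and permutation invariance of the data), so the mathematical content is the same; but your packaging is arguably cleaner, because it never needs to identify the solution associated with the \emph{averaged} potential $\tilde\phi$ --- a point where the paper's appeal to ``linearity of the constraints'' is delicate, since the constraint $Lv+\varepsilon^{-1}\phi\odot(v-e)=0$ is bilinear in $(\phi,v)$ and the average of the $v_s$ need not solve the equation for $\tilde\phi$; the rigorous version of that step is to apply Jensen directly to the concave map $\phi\mapsto\mathcal{I}_m(v_\phi)$. Your remark that $Z_m$ and $\cup_{\ell\neq m}Z_\ell$ must be invariant separately, not merely $Z$, is exactly the right bookkeeping point and matches the hypotheses of \cref{p:GraphAuto}.
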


\begin{proof}
Let $\phi^\star$ be the solution  to \eqref{e:OptProb2Relax} and let  $v^\star$ be the corresponding harmonic function. 
Since $P_\pi$ is finite, there exists an integer $S$, so that $P_\pi^S = I$. We consider the test function 
$$
\tilde \phi = \frac{1}{S} \sum_{s \in [S]} P_\pi^s \phi^*, 
$$
which is invariant to $P_\pi$. For each $s\in{S}$, denote the solution to the problem with potential $P_\pi^s \phi^\star$ by $v_s$. We then use the linearity of the constraints to argue that $\tilde v = \frac{1}{S} \sum_{s \in [S]} v_s = \frac{1}{S} \sum_{s \in [S]} P_\pi^s v^\star$ is a solution for $\tilde \phi$. The strict concavity of $\mathcal{I}_m$ (\cref{t:Convexity}) and Jensen's inequality then gives that 
$$
\mathcal{I}_m(\tilde v) > \frac{1}{S} \sum_{s \in [S]} \mathcal{I}_m(v_s) 
=  \frac{1}{S} \sum_{s \in [S]} \mathcal{I}_m(v^\star)  
= \mathcal{I}_m (v^\star) 
$$
unless $\tilde\phi=\phi^*$. Here we use that the $\ell^1$-norm is invariant to permutation of the argument. 
This strict inequality contradicts the optimality of $\phi^\star$. 
\end{proof}

\section{Computational methods and experiments} \label{sec:Num}
In this section, we perform several numerical experiments to investigate and illustrate the targeting influence problem
\eqref{e:OptProb2}. We first describe our implementation of the two computational approaches to this NP-hard problem; see \cref{alg:GreedyAlg,alg:RelaxAlg}. Both algorithms were implemented in MATLAB and run on a laptop computer.

\begin{algorithm}[t]
\caption{Greedy algorithm for solving grouped opponent targeting influence problem
\eqref{e:OptProb2}}
\label{alg:GreedyAlg}
\begin{algorithmic}
\State {\bf Input}: directed graph, authority index $m \in [k]$, initial zealot sets $Z = \amalg_{\ell \in [k]} Z_\ell$\creflastconjunction an augmented zealot set size $t \in \mathbb N\setminus 0$. 
\State {\bf Output}: an updated zealot set $Z_m \cup T$ of size $|Z_m \cup T| = |Z_m| + t$. 
\State Set $T = \varnothing$. 
\For {$s \in [t]$} 
\For{$i \in V\setminus (Z\cup T)$} 
\State Set $\tilde T = T \cup \{i\}$ and solve 
\begin{align*}
 \ & L v(i)  = 0, && i \in V \setminus (Z \cup \tilde T), \\ 
& v(i) = 1, && i \in Z_m \cup \tilde T,  \\
& v(i) = 0, && i \in Z_\ell, \ \ell \neq m .
\end{align*}

\State Compute the energy 
$$
E(i) = \mathcal{I}_m(v) = \frac{1}{|V|} \| v \|_{\ell^1(V)}.
$$
\EndFor
\State  Compute
$
j=\mathrm{argmax}_{i \in V\setminus (Z\cup T)}  \ E(i), 
$
breaking ties randomly if needed.
\State Set $T = T \cup \{j\}$. 
\EndFor
\end{algorithmic}
\end{algorithm}

\begin{algorithm}[t]
\caption{Algorithm for approximately solving  the  grouped opponent targeting influence problem \eqref{e:OptProb2} utilizing the relaxation
\eqref{e:OptProb2Relax}.}
\label{alg:RelaxAlg}
\begin{algorithmic}
\State {\bf Input}: directed graph, authority index $m \in [k]$, 
initial zealot sets $Z = \amalg_{\ell \in [k]} Z_\ell$\creflastconjunction 
an augmented zealot set size $t \in \mathbb N\setminus 0$. 
\State {\bf Output}: an updated zealot set $Z_m \cup T$ for of size $|Z_m \cup T| = |Z_m| + t$. 
\State Set $T = \varnothing$. 
\For {$s \in [t]$} 
\State Solve the convex optimization problem 
\begin{align*}
\max_{\phi\colon V \to \mathbb R}  \ & \mathcal{I}_m(v) \\
\textrm{s.t.} \ &  L v(i)  + \varepsilon^{-1} \phi \odot (v - e)(i) = 0,  && i \in V \setminus (Z\cup T) , \\ 
 & v(i) = 1, && i \in Z_m \cup T, \\
 & v(i) = 0, && i \in Z_\ell, \ \ell\neq m, \\ 
& \phi \geq 0, \ 
\phi\!\mid_{Z\cup T} = 0,  \ 
\sum_{i\in V} \phi(i) = 1. 
\end{align*}

\State  Compute
$
j=\mathrm{argmax}_{i \in V\setminus (Z\cup T)} \  \phi(i), 
$
breaking ties randomly if needed.
\State Set $T = T \cup \{j\}$. 
\EndFor
\end{algorithmic}
\end{algorithm}

\subsection{Computational Methods}
We first implement a greedy algorithm to solve \eqref{e:OptProb2}; see \cref{alg:GreedyAlg}.  
For each opinion $m$, we sweep through all non-zealot nodes, $i \in V \setminus Z$, and solve the version of \eqref{e:OptProb2} with Dirichlet boundary conditions appropriately chosen. For the relatively small size graphs considered here, we simply use the MATLAB \texttt{backslash} operator to solve the Schur complement problem.  After evaluating the energy for each non-zealot node, $E(i)$, we find the vertex which gives the largest energy and add it to the zealot set $m$.  Since each iteration in $s$ requires $O(n)$ linear solves each costing $O(n^3)$, the run time of this algorithm is $O(t n^4)$.

In \cref{alg:RelaxAlg}, we implement an algorithm based on the relaxation of the problem described in \cref{sec:Relax}. Given existing zealot sets, we use a quasi-Newton BGFS method in MATLAB via \texttt{fmincon} in order to approximate the unique optimal solution to \eqref{e:OptProb2Relax}.  We solved \eqref{e:OptProb2RelaxA} using the MATLAB least squares function \texttt{lsqr}. Here, we dramatically speed up the numerical optimizations by including the gradient as computed in Theorem \ref{t:Convexity}, which we verified by using the \texttt{Gradient Check} feature in the optimization solver. 
Each iteration in $s$ requires the solution of a single optimization problem. The time complexity of computing the gradient is $O(n^3)$  and we perform $O(1)$ iterations to approximate the solution to the optimization problem. The run time for \cref{alg:RelaxAlg} is then $O(t n^3)$, an $O(n)$ reduction over \cref{alg:GreedyAlg}. 
A choice of $\varepsilon>0$ must be made and we discuss this choice in the examples below. Generally, we observe that for very small $\varepsilon$ the algorithm prioritizes proximity to the existing zealot nodes, while for $\varepsilon \sim \| L \|_{\rm Frobenius}^{-1}$, the graph dynamics play a more significant role.

In both algorithms, the initialization of zealot nodes can be done by, e.g., targeting a specific region of the graph, by random initial assignment, or by using a dynamic programming approach to identify the best possible first move for one of the label sets.

\subsection{Computational experiments} 
\label{s:CompExp}
In the general us-vs-them targeting influence problem  \eqref{e:OptProb}  and its reduction to a group opponent problem \eqref{e:OptProb2}, one may consider the case where multiple authorities $m\in[k]$ compete for non-zealot individuals in the network at the same time. 
For simplicity, in each experiment below, we consider a game between two opinion authorities to capture the largest proportion of influence. Each authority alternatively selects one vertex to convert to their opinion.

\begin{figure}[t]
    \centering
\includegraphics[width=.49\textwidth]{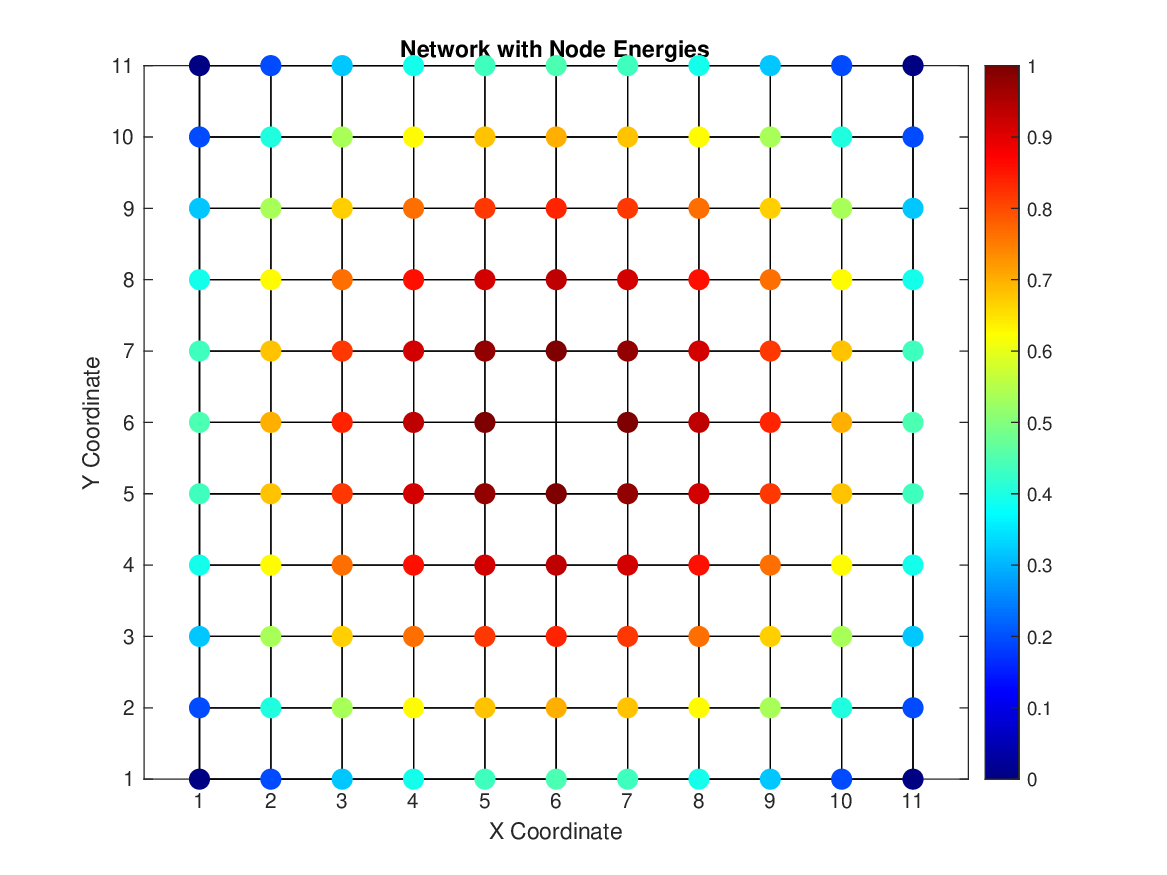} 
\includegraphics[width=.49\textwidth]{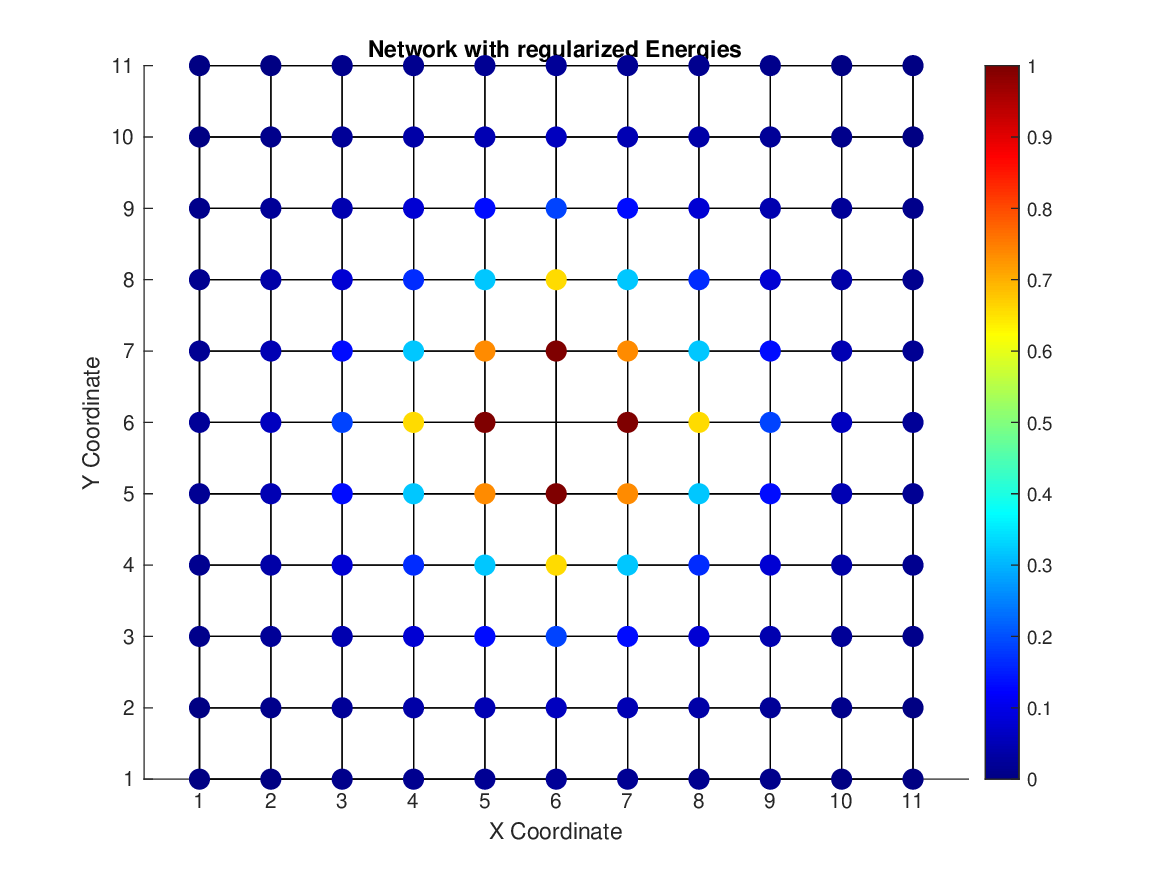} 
    \caption{On an $11\times 11$ square grid graph, we consider the case where first authority has chosen the center node and the second authority is choosing a vertex. {\bf (left)} We plot the measure of influence $\tilde{\mathcal{I}}_2$ for the second authority for each available node. 
 {\bf (right)} We plot the $\tilde \phi$ value obtained as the solution to \cref{alg:RelaxAlg} with $\varepsilon = .15$. The  $\tilde \phi$ values respect the underlying symmetry and correctly predict that the optimal single-node choice is to pick one of the 4 nearest neighbors. See \cref{s:CompExp1}. }
    \label{fig:square}
\end{figure}

\subsubsection{Square grid} \label{s:CompExp1}
In a first experiment, we consider an $11\times 11$ square grid graph. The first opinion authority has chosen $Z_1$ to be the center node. In \cref{fig:square}, we compare the energy 
that the second opinion authority would realize choosing each node vs.\ the prediction via the relaxed problem \eqref{e:OptProb2Relax}. 
On the left of \cref{fig:square}, we plot for each $i \in V \setminus Z_1$, a normalized  energy $$
 \tilde{\mathcal{I}}_2(i) := \frac{\mathcal{I}_2(i) - \min_j \mathcal{I}_2(j) } {\max_j \mathcal{I}_2(j) - \min_j \mathcal{I}_2(j)}, 
 \qquad 
 i \in V \setminus Z_1.
$$   
where $v$ satisfies the boundary condition $v=0$ on $Z_1$ and $v=1$ on  $Z_2 = \{i\}$.  
On the right, we solve \eqref{e:OptProb2Relax} with $\varepsilon = .15$ and plot the normalized vertex function, normalized similar to above, 
$$
 \tilde \phi(i) := \frac{\phi(i) - \min_j \phi(j) } {\max_j \phi(j) - \min_j \phi(j)}, 
 \qquad 
 i \in V \setminus Z_1.
$$
As predicted by \cref{p:GraphAuto}, this normalized vertex function $\tilde \phi$ has $4$-fold symmetry. 
From the left plot, we see that it is advantageous to choose a neighboring node of $Z_1$ and the plot of $\tilde \phi$ in the right panel is a good predictor of this behavior.

\begin{figure}[t]
    \centering
\includegraphics[width=.49\textwidth]{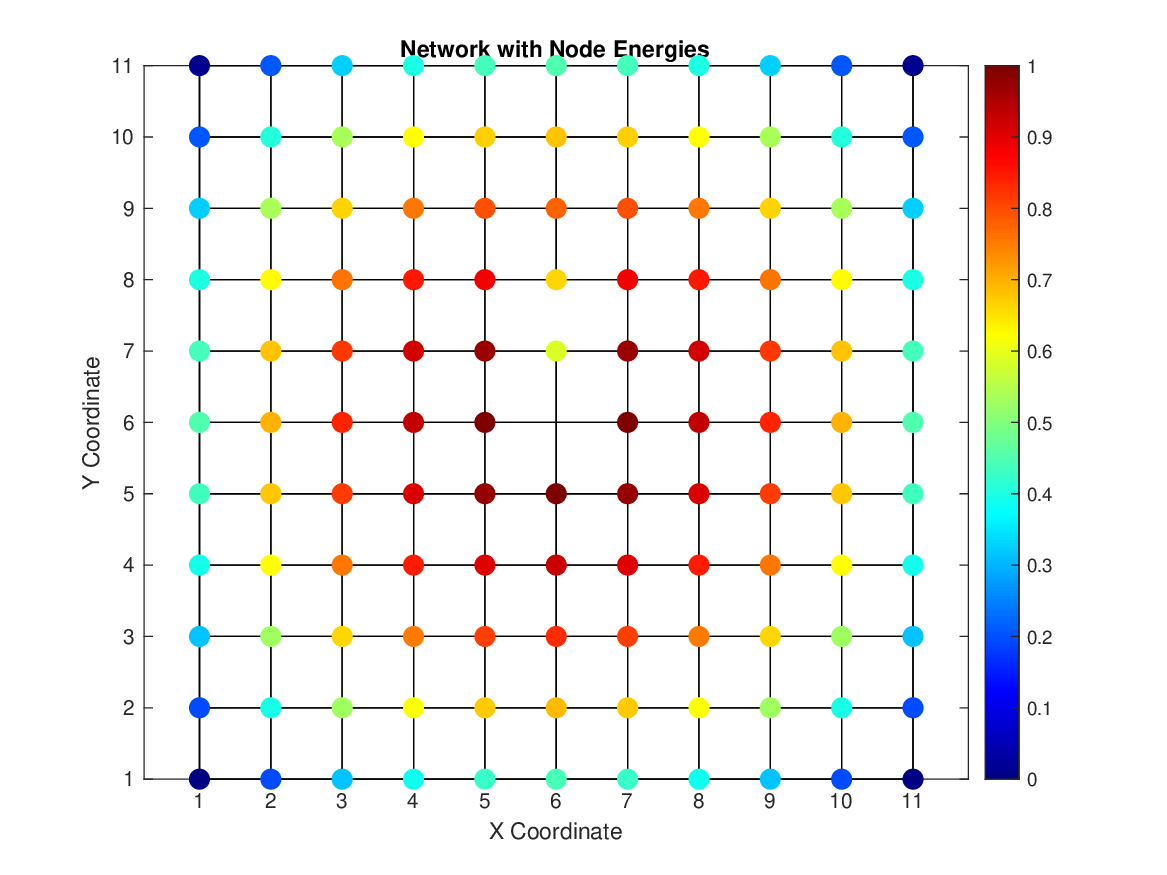}
\includegraphics[width=.49\textwidth]{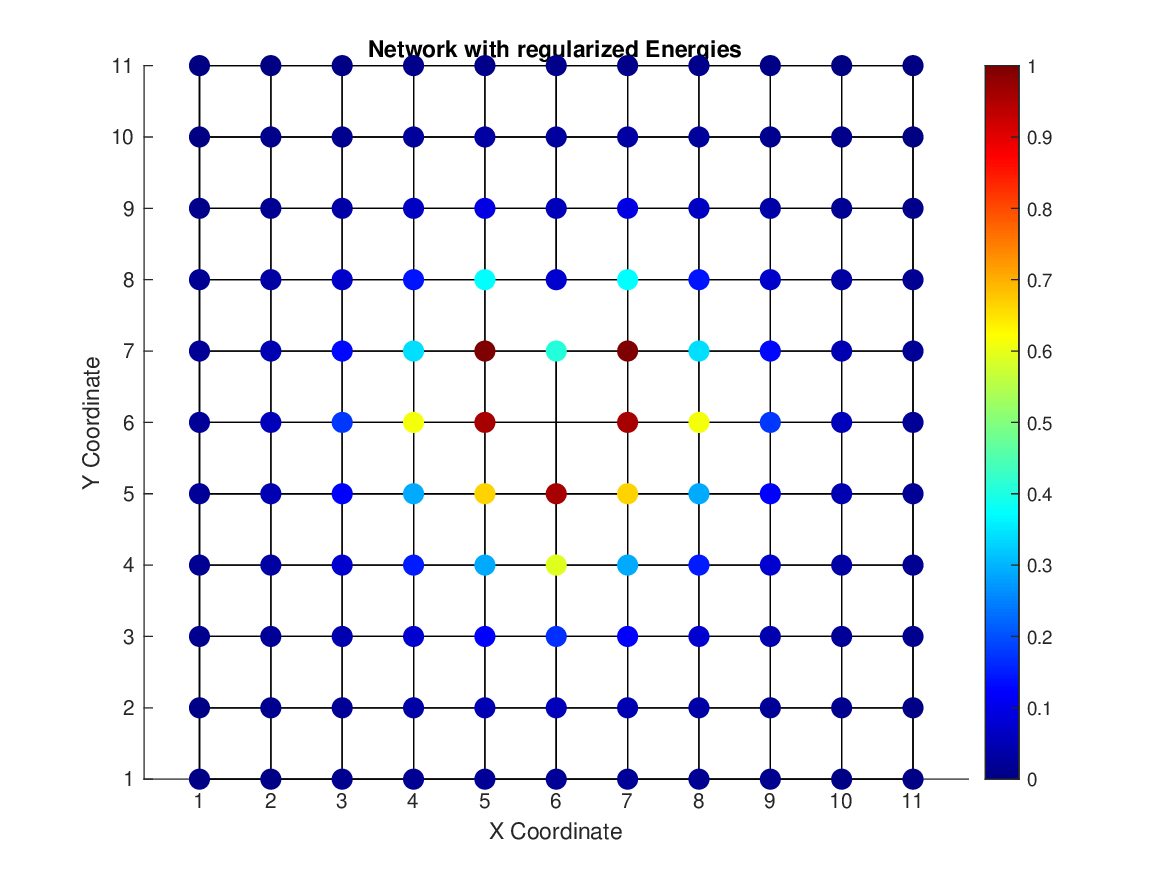}
    \caption{We again consider an $11\times 11$ square grid graph, but now with a single edge missing. The left and right panels are as in \cref{fig:square}. See \cref{s:CompExp2}.}
    \label{fig:squaredefect}
\end{figure}

\subsubsection{Square grid with a defect} \label{s:CompExp2}
We repeat the first experiment (see \cref{s:CompExp1}), but for an $11\times 11$ square grid graph with a defect -- the edge connecting vertex $(6,7)$ with vertex $(6,8)$ has been removed. The results are plotted in \cref{fig:squaredefect}. 
As predicted by \cref{p:GraphAuto}, the vertex function $\tilde \phi$ has a horizontal reflection symmetry, but the vertical reflection symmetry has now been broken. 
From the left plot, we see that it is advantageous to choose a neighboring node of $Z_1$ that is below $Z_1$. The neighboring node above $Z_1$ is less advantageous; because of the missing edge, there are fewer edges for the opinion to propagate from that node. Again, the function from the relaxed problem plotted in the right does a good job predicting this asymmetry, and also highlighting that the nodes diagonal up to the right/left are preferable; while the specific rank order of these diagonal nodes relative to the nearest neighbors down, left and right differ slightly between the two panels, the impacts are very similar from each of these nodes. 

\begin{figure}
    \centering
\includegraphics[width=.49\textwidth]{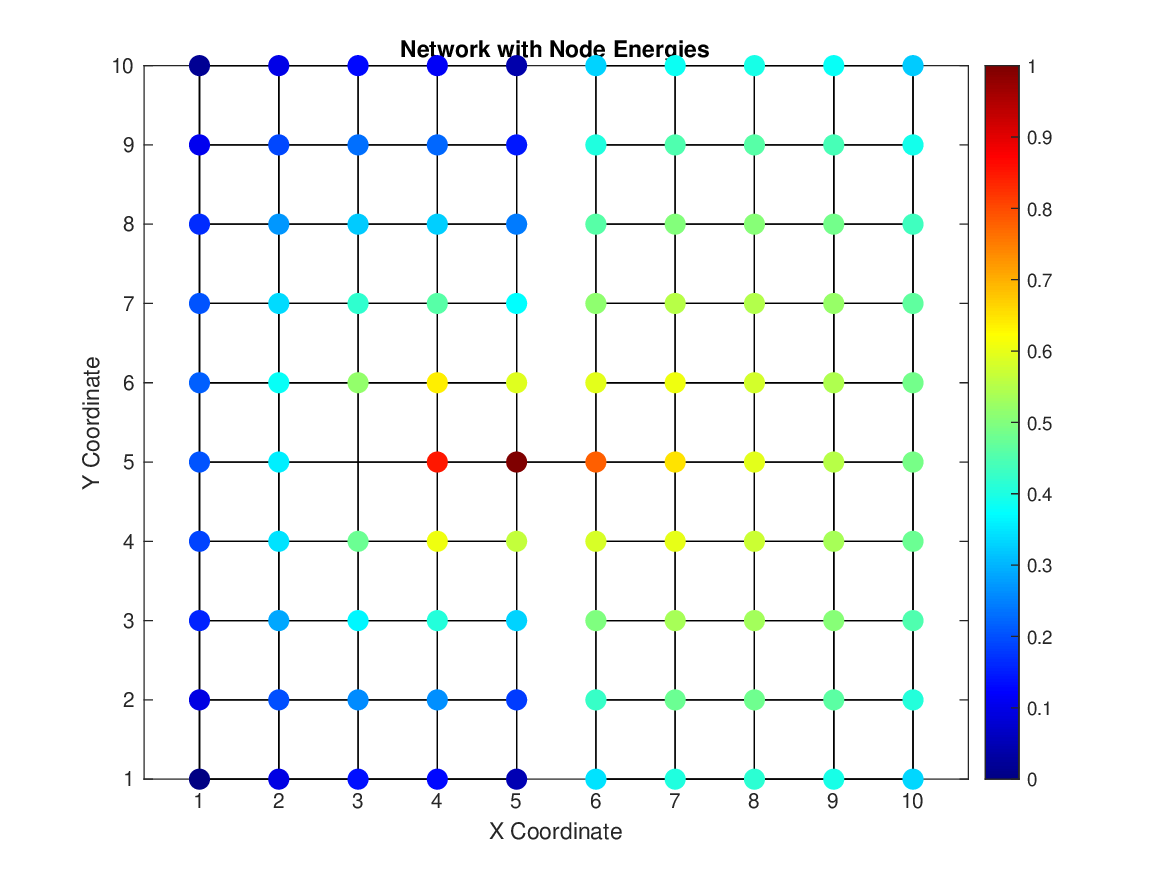}  
\includegraphics[width=.49\textwidth]{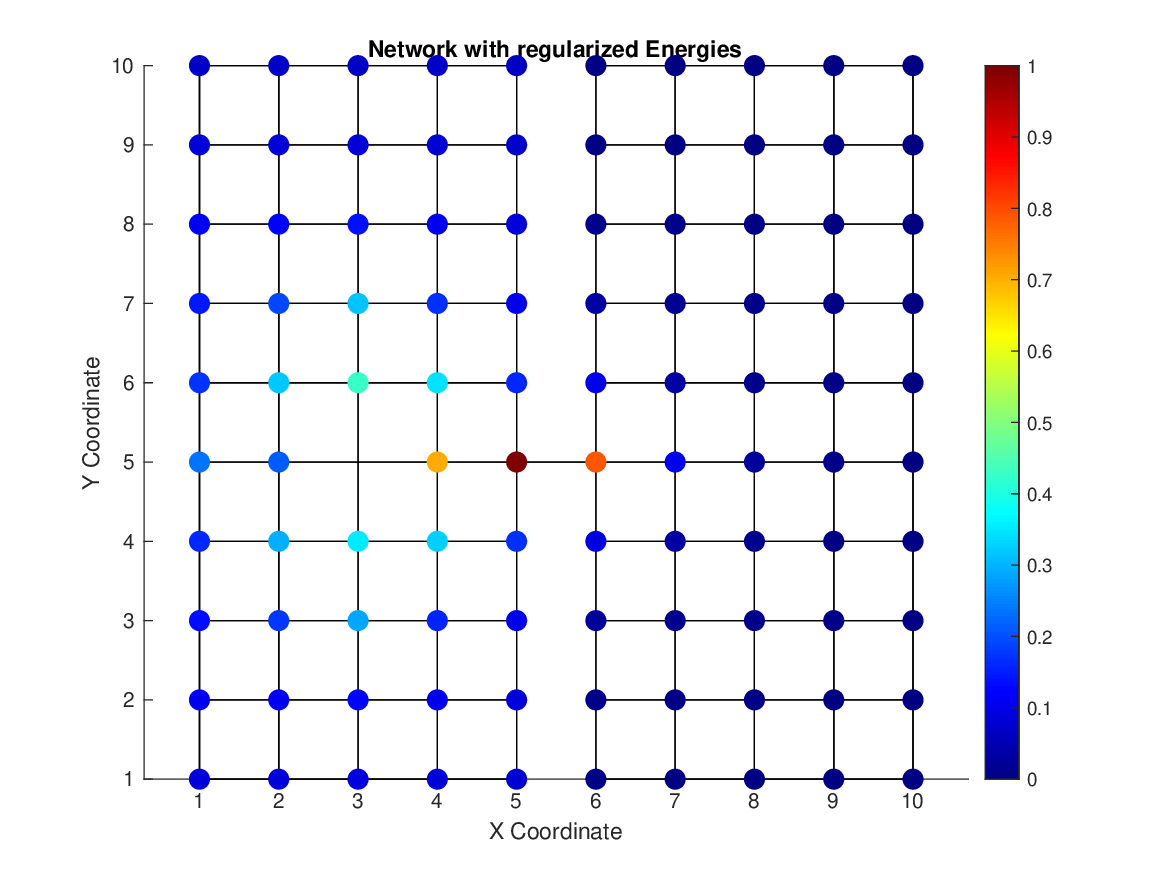} \\ 
\includegraphics[width=.49\textwidth]{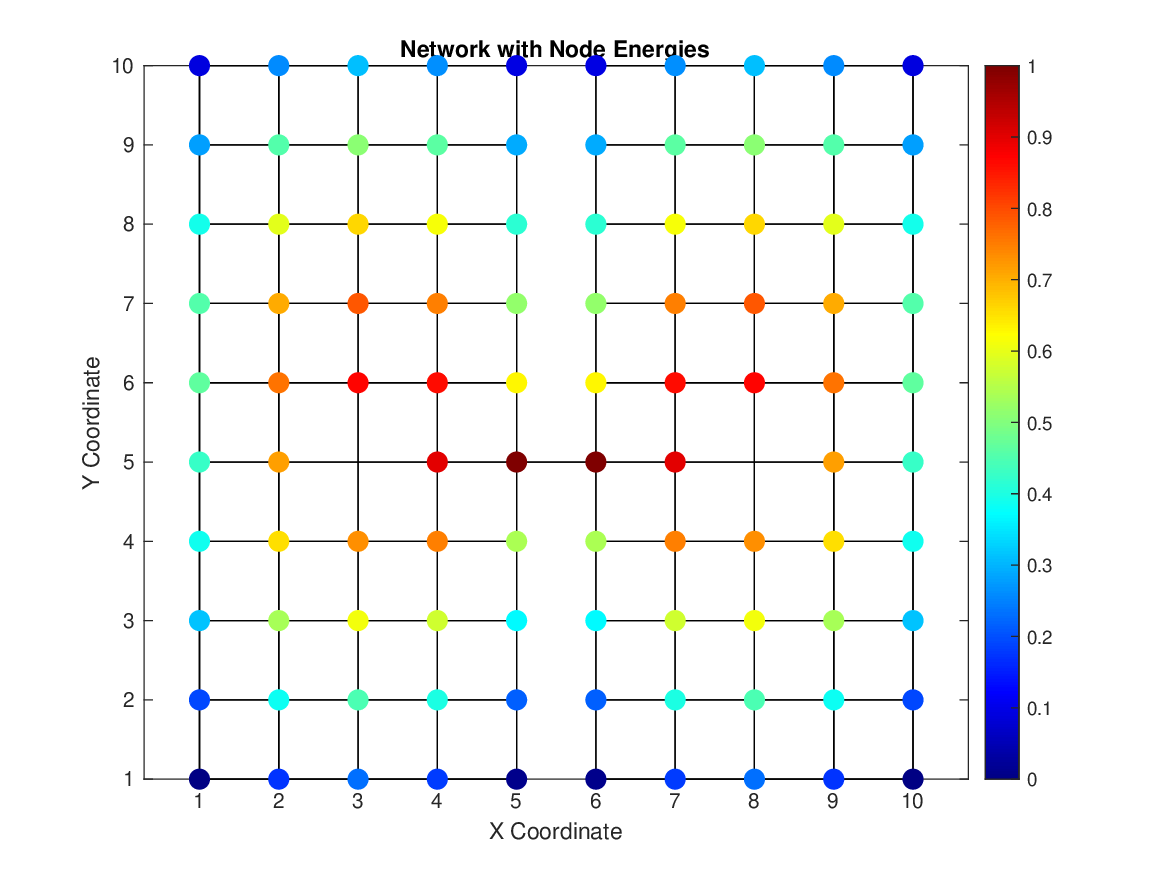} 
\includegraphics[width=.49\textwidth]{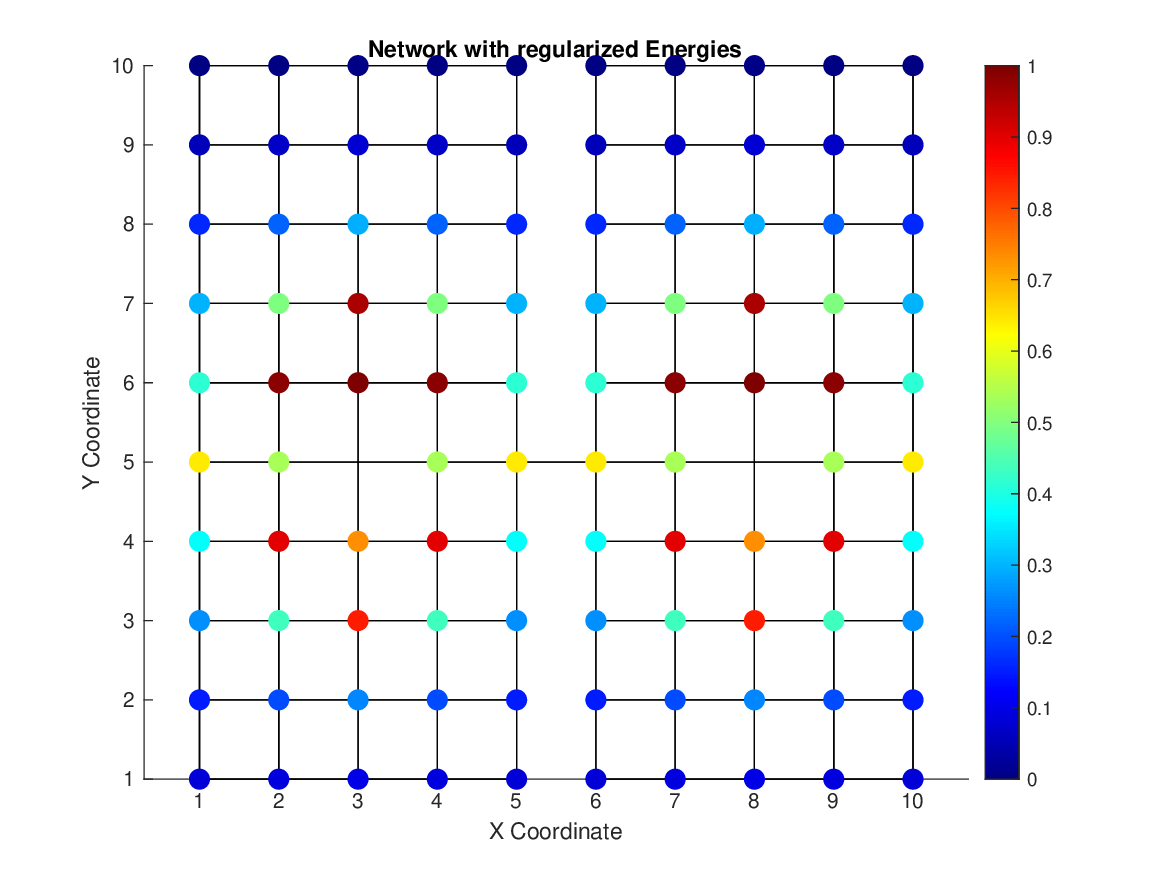}
    \caption{We consider an H-graph as described in \cref{s:CompExp3} and again consider the case where the first authority has chosen $Z_1$ and the second authority is choosing a vertex. 
    {\bf (top)} Here, $Z_1$ consists of a single node $(3,5)$ in the left subgraph. The left and right panels are as in \cref{fig:square}. 
    {\bf (bottom)} Here,  $Z_1$ consists of a two nodes, one taken from the left and one taken from the right subgraphs. Again, The left and right panels are as in \cref{fig:square}. See \cref{s:CompExp3} for details. }
    \label{fig:hgraph}
\end{figure}

\subsubsection{H-graph} \label{s:CompExp3}
We next consider an `H' graph, illustrated in \cref{fig:hgraph}. The graph is obtained by connecting two $5\times 10$ square grid graphs by a single edge (``bridge'') connecting the $(5,5)$ vertex of one with the $(1,5)$ vertex of the other (at (6,5) in the figure).  
In the top panels of \cref{fig:hgraph}, the first opinion authority has chosen $Z_1$ to be the $(3,5)$ node on the left subgraph. 
In the bottom panels, the first opinion authority has chosen $Z_1$ to be the $(3,5)$ node on both the left and right subgraphs. 
As before, in the left panels, we plot the energy that the second opinion authority would realize if they chosen each available node 
and, in the right panels, we plot the
the prediction $ \tilde \phi(i)$  via the relaxed problem \eqref{e:OptProb2Relax} with $\varepsilon = .15$.  
In both cases, the vertices in the top half of the graph are slightly preferred to the bottom half. 
In the left panels, we see that there is a trade-off between proximity to the set $Z_1$ and the high betweenness centrality of the vertices near the bridge.  Again, $\tilde \phi$ plotted in the right panels does a good job predicting this behavior.

\begin{figure}
    \centering
\includegraphics[width=.49\textwidth]{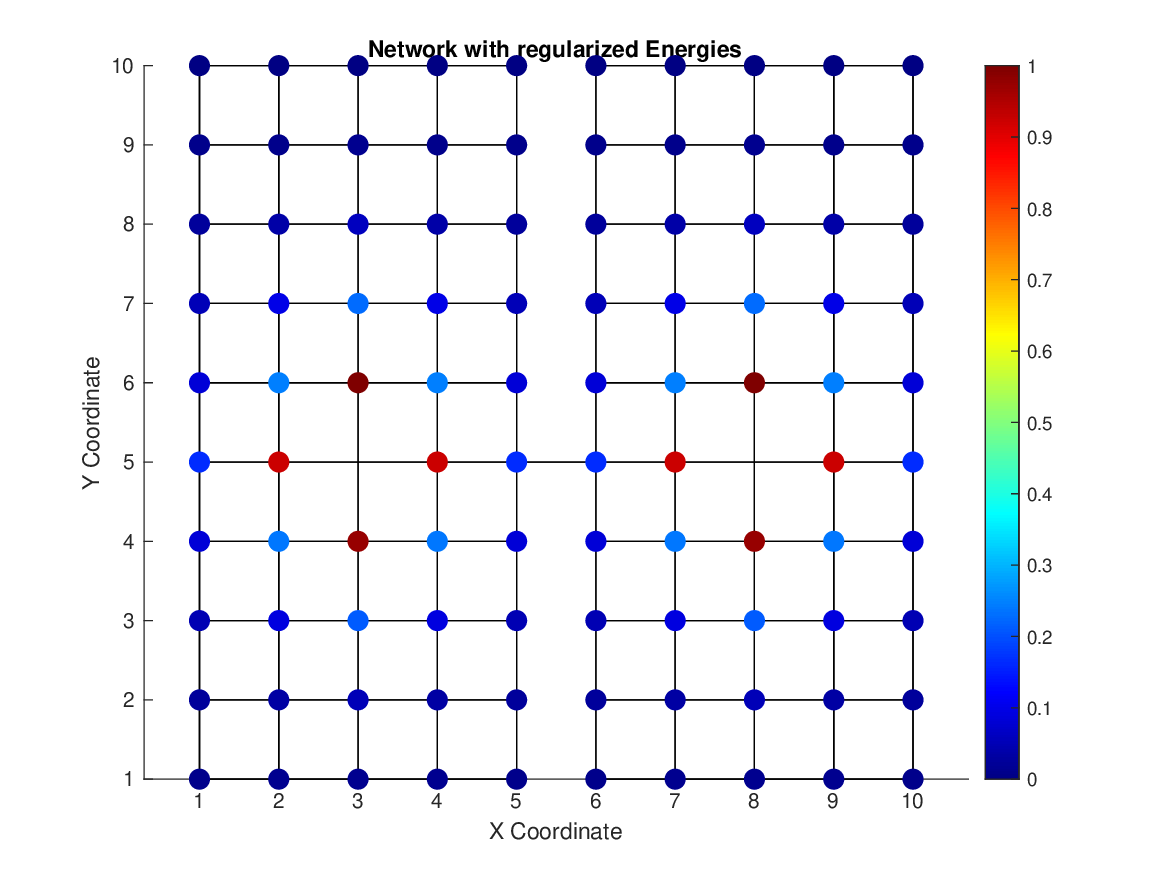}  
    \caption{As in \cref{fig:hgraph}, we consider an H-graph, but this time we plot $\tilde \phi$ for the value $\varepsilon = .015$, which is a factor of 10 smaller than the previous value. Compare to the bottom right figure of \cref{fig:hgraph}. We observe that the algorithm correctly wants to cordon off the first authority opinion. }
    \label{fig:smalleps}
\end{figure}

In a second experiment on the H-graph, we consider varying the parameter $\varepsilon >0$. 
We again consider the case where the first opinion authority has chosen $Z_1$ to be the $(3,5)$ nodes on both the left and right subgraphs. We now take $\varepsilon = .015$, a factor of 10 smaller than the previous value. 
We plot the function $\tilde \phi$ in \cref{fig:smalleps}, which should be compared to the bottom right panel of \cref{fig:hgraph}.  We observe that as $\varepsilon$ decreases, the function $\tilde \phi$ localizes near the set $Z_1$. In other words, as $\varepsilon \to 0$, $\phi$ will move to completely isolate the opinion class of $Z_1$.  Since $\phi$ is not really constrained to consider a budget of a small number of nodes, this is the optimal way to minimize $Z_1$'s influence.  However, for $\varepsilon$ chosen such that the global graph dynamics can be viewed, it can demonstrate further useful nodes for restricting spread of $Z_1$'s opinion.  An expanded version of our algorithm could actually be implemented to choose a response based upon a probability distribution set by the function $\phi$ with $\varepsilon$ chosen at a scale that allows for spreading out amongst the graph.

\subsubsection{Interactive game implementation} \label{s:CompExp4}
Finally, we implemented our algorithm, which can be found on github~\cite{gamegit}, in a graphical user interface at~\cite{gameweb}. The graphical user interface is designed for non-experts and allows the user to select a graph type and size. The graph types available are 
spatial random graphs, tree graphs, ladder graphs, square lattices, hexagonal lattices, triangle lattices\creflastconjunction cycles.  A version of the game played on an implementation of a random geometric graph with $50$ vertices is shown in \cref{fig:out2}. The user can also choose to have one or both players be played automatically. The automatic player can be set to random (easy), greedy (medium), and dynamic programming heuristics (hard) levels of play and are useful for developing intuition as well as for introducing the problem to popular audiences on a conceptual level.

\begin{figure}
    \centering
\includegraphics[width=.8\textwidth]{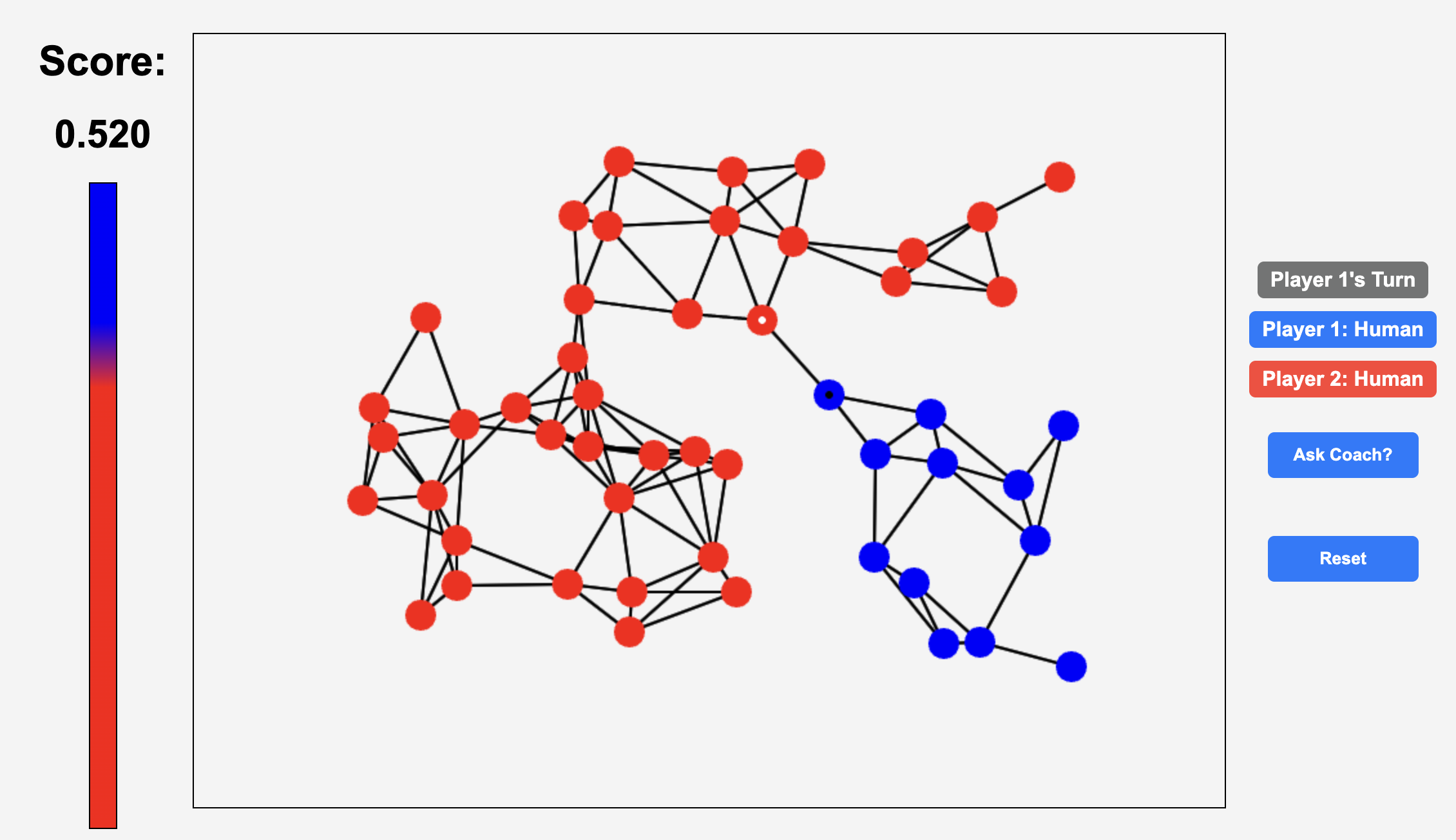} 
    \caption{An illustration of the interactive game implementation of the targeting influence problem  \cite{gameweb}. Here, we have an random geometric graph on 50 nodes. Two human players have each chosen a single node and Player 2 is currently winning; see \cref{s:CompExp4}. }
    \label{fig:out2}
\end{figure}

\section{Discussion} \label{sec:Disc}

In this paper, we introduced a model for targeting influence in a social network in settings where there are multiple extreme opinions; see \cref{e:OptProb}.  We first showed that each authority can group together the opposing opinions when considering how to maximize their own influence; see \eqref{e:OptProb2}. We established that the targeting influence problem is NP-hard to approximate; see \cref{t:NP-hard}. We also showed that the objective function in the influence targeting problem is monotone and submodular and hence admits a $1 - 1/e$ approximation algorithm; see \cref{t:submodular}. In \cref{sec:Relax}, we developed an approximation strategy for solutions of the  NP-hard targeting influence  problem. We show that the objective function for the relaxed optimization  problem \eqref{e:OptProb2Relax} is strictly concave and compute the gradient and Hessian; see \cref{t:Convexity}. We also proved that, in the case when the graph has a symmetry, the optimal solution to the relaxed problem also has the same symmetry; see \cref{p:GraphAuto}. Finally, in \cref{sec:Num}, we described a numerical implementation of our method and described the results of experiments to illustrate our results and test how well the relaxed optimization problem predicts the solution to the original problem.

There are several simplifications that we made in formulating the targeting influence  problem \cref{e:OptProb}, including the following. 

\begin{enumerate}
\item Each social connection is parameterized by a single number (the link weight), whereas real social connections are highly complex. Indeed, we might even disapprove of the choices made by our acquaintances; this could be modeled by incorporating signed weights in the directed graph or allowing for multiple kinds of connections. 

\item In the harmonic model for opinion dynamics used here, there is  no time dependence. This is equivalent to assuming that the timescale of social opinion spread is faster than the timescale on which players can select and influence particular nodes, which may or may not be valid, depending on the situation. The assumption of linear dynamics on this short time scale is also not obviously true, and is one of the main departures of our contribution from previous approaches, such as cascading models. A very interesting line of future work would be to explore further when the various dynamics of opinion spread are appropriate.

\item We have modeled zealots to hold only extreme opinions. While this might be a reasonable assumption in some cases (for example, in advertising, sponsored influencers tend to push a particular product), in political elections it might be better to allow multiple preferences of zealots (for example, an activist might prefer anyone but candidate X). 

\item We have assumed that an authority for a particular opinion can convert any chosen member to a zealot through sponsorship/incentives. But this might not be the case in all applications or the budget might have to account for different prices for conversion of different nodes. Our framework is extensible to such cases by incorporating costs for converting vertices,  i.e., 
in \eqref{e:OptProbRelaxC} or \eqref{e:OptProb2RelaxE}, we could assign weights $w_i \geq 0$, $i \in V \setminus Z$ and require $\sum_{i\in V} w_i \phi(i) = 1.$

\item In the numerical examples, we have explored the case when the different authorities play a game, taking turns adding to their respective zealot sets. However, in practice, authorities compete for influence all at once. 
\end{enumerate}

Beyond addressing the model shortcomings outlined above, there are of course other interesting possible future directions for this work. 
In the game scenario where the authorities take turns choosing a single zealot to convert, we think it is an interesting question of whether the first player has an advantage. 
In the case of an undirected graph, we conjecture that the first player, if they play optimally, will always beat or tie the second player. 
In the case of a directed graph, this is not true. Consider a directed cycle graph where all orientations are the same. If the second player chooses the node adjacent and ``downstream'' to the node chosen by the first player, they will completely influence all nodes but one and win the turn.

In recent years, there have been many developments investigating the consistency of problems posed on geometric graphs in the limit as the number of sampled nodes tends to infinity (see, e.g., \cite{calder2023rates,Osting_2017,YUAN_2021}). It would be interesting to consider this continuum limit here, where one might expect to obtain the Laplace problem on a fixed domain and authorities alternatively introduce boundary components where a Dirichlet condition is satisfied. A generalization that would allow Dirichlet data to be specified on lower dimensional sets is to consider the $p$-Laplacian for $p>2$. A very recent paper \cite{favre2024continuum} considers a continuum version of the continuous-time DeGroot model.

\section*{Acknowledgments}
The authors thank Wesley Hamilton for working with them on a very early version of this idea and Alex McAvoy for helpful conversations about game theory interpretations of these models. We also thank Wilson Stoddard, Connor McBride, Taylor Larkins, Ian Goodwin\creflastconjunction other members of the Boyd Lab for help developing the interactive software for the game \cite{gamegit}, which greatly helped develop our intuition. We gratefully acknowledge the Banff International Research Station and the Casa Mathematica Oaxaca for supporting our research efforts. 

\clearpage
\bibliographystyle{siamplain}
\bibliography{refs}
\end{document}